\documentclass[12pt]{article}

\usepackage{latexsym,graphics,epsfig,amsmath,amssymb,tabularx,booktabs,apacite,natbib}
\usepackage{paralist, multirow,color, xcolor}

\usepackage{enumerate}
\usepackage{amsthm}
\usepackage{url}
\usepackage[mathscr]{eucal}

\usepackage[T1]{fontenc}

\setlength{\topmargin}{0in}
\setlength{\topskip}{0in}
\setlength{\oddsidemargin}{0in}
\setlength{\evensidemargin}{0in}
\setlength{\headheight}{0in}
\setlength{\headsep}{0in}
\setlength{\textheight}{9in}
\setlength{\textwidth}{6.5in}
\theoremstyle{plain}
\newtheorem{proposition}{Proposition}
\newtheorem{theorem}{Theorem}

\newtheorem{example}{Example}

%%----------- Antonio commands  --------------

\def\bd#1{\mbox{\boldmath $#1$}}
\newcommand{\tr}{^{\prime}}

\def\cg#1{\ensuremath{\mathcal{#1}}}
%% -----------------------------------------

\newenvironment{keywords}{%
\begin{changemargin}{1cm}{1cm}
\noindent{\bf Keywords:}
}
{\end{changemargin} }

\newenvironment{changemargin}[2]{%
\begin{list}{}{%
\setlength{\topsep}{0pt}%
\setlength{\leftmargin}{#1}%
\setlength{\rightmargin}{#2}%
\setlength{\listparindent}{\parindent}%
\setlength{\itemindent}{\parindent}%
\setlength{\parsep}{\parskip}%
}%
\item[]}{\end{list}}

\newcommand*\xbar[1]{%
  \hbox{%
    \vbox{%
      \hrule height 0.5pt % The actual bar
      \kern0.5ex%         % Distance between bar and symbol
      \hbox{%
        \kern-0.25em%      % Shortening on the left side
        \ensuremath{#1}%
        \kern-0.1em%      % Shortening on the right side
      }%
    }%
  }%
}

\begin{document}

\title{On the role of the overall effect in exponential families}

%\author{Anna Klimova \\ 
%National Center for Tumor Diseases (NCT),
%Partner Site Dresden and \\

%Institute for  Medical Informatics and Biometry,\\ 
%Technical University, Dresden, Germany

%\and Tam\'{a}s Rudas \\
%Center for Social Sciences, Hungarian Academy of Sciences and \\
%Department of Statistics, E\"{o}tv\"{o}s Lor\'{a}nd University, Budapest, Hungary}

\author{Anna Klimova \\
{\small{National Center for Tumor Diseases (NCT), Partner Site Dresden, and}}\\
{\small{Institute for  Medical Informatics and Biometry,}}\\ 
{\small{Technical University, Dresden, Germany} }\\
{\small \texttt{anna.klimova@nct-dresden.de}}\\
{}\\
\and 
Tam\'{a}s Rudas \\
{\small{Center for Social Sciences, Hungarian Academy of Sciences, and}}\\
{\small{Department of Statistics, E\"{o}tv\"{o}s Lor\'{a}nd University, Budapest, Hungary}}\\
{\small \texttt{rudas@tarki.hu}}\\
}

 \date{}
   
\maketitle

\begin{abstract}
\vspace{2mm}

\noindent Exponential families of discrete probability distributions when the normalizing constant (or overall effect) is added or removed are compared in this paper. The latter setup, in which the exponential family is curved, is particularly relevant when the sample space is an incomplete Cartesian product or when it is very large, so that the computational burden is significant. The lack or presence of the overall effect has a fundamental impact on the properties of the exponential family. When the overall effect is added, the family becomes the smallest regular exponential family containing the curved one. The procedure is related to the homogenization of an inhomogeneous variety discussed in algebraic geometry, of which a statistical interpretation is given as an augmentation of the sample space. The changes in the kernel basis representation when the overall effect is included or removed are derived. The geometry of maximum likelihood estimates, also allowing zero observed frequencies, is described with and without the overall effect, and various algorithms are compared. The importance of the results is illustrated by an example from cell biology, showing that routinely including the overall effect leads to estimates which are not in the model intended by the researchers. 
\end{abstract}

\begin{keywords}
algebraic variety, contingency table, independence,  log-linear model, maximum likelihood estimation, overall effect, relational model
\end{keywords}

%\hfill{{Anna {Klimova}}, ~ {{Tam\'{a}s {Rudas}}}, ~\\

%Institute of Science and Technology Austria

\vspace{1cm}

%\author{ Anna Klimova, Tam\'{a}s Rudas}

%\author{Antonio Forcina, Anna Klimova$^{\rm a,}$$^{\rm b,}$$^{\ast}$\thanks{$^\ast$Corresponding author. Email: anna.klimova@nct-dresden.de
%\vspace{6pt}} and Tam\'{a}s Rudas$^{\rm c,}$$^{\rm d}$\\\vspace{6pt}  $^{\rm a}${\em{National Center for Tumor Diseases, Partner Site Dresden, and}} \\ \vspace{16pt} $^{\rm b}${\em{Institute for Medical Informatics and Biometry, TU Dresden, Germany;}} \\ \vspace{6pt}  $^{\rm c}${\em{Center for Social Sciences, Hungarian Academy of Sciences and}}\\ \vspace{6pt} $^{\rm d}${\em{E\"{o}tv\"{o}s Lor\'{a}nd University, Budapest, Hungary}} }

%\pagenumbering{arabic}
%
\section{Introduction}

This paper deals with exponential families of probability distributions over discrete sample spaces. When defining such families, usually, a normalizing constant, which of course, is constant over the sample space but not over the family, is included. The presence of the normalizing constant implies that the parameter space may be an open set, which, in turn, is necessary for asymptotic normality of estimates and for the applicability of standard testing procedures. The normalizing constant, from an applied perspective, may be interpreted as a baseline or common effect, present everywhere on the sample space and is, therefore, also called the overall effect. The focus of the present work is to better understand the implications of having or not having an overall effect in such families, in particular how adding or removing the overall effect affects the properties of discrete exponential families.

Motivated by a number of important applications, \cite{KRD11, KRbm,KRextended} developed the theory of relational models, which generalize discrete exponential families, also called log-linear models, to situations when the sample space is not necessarily a full Cartesian product, the statistics defining the exponential family are not necessarily indicators of cylinder sets, and the overall effect is not necessarily present. Exponential families without the overall effect are particularly relevant, sometimes necessary, when the sample space is a proper subset of a Cartesian product.  Several real examples, when certain combinations of the characteristics were either not possible logically or were left out from the design of the experiment were discussed in \cite{KRD11}. A real problem of this structure from cell biology is analyzed in this paper, too. When the overall effect is not present, the standard normalization procedure to obtain probability distributions cannot be applied, because the family is curved \cite{KRD11}. When, in spite of this, the standard normalization procedure is applied, as was done in this analysis, the resulting estimates do not possess the fundamental model properties. 

The standardization of the estimates in exponential families is also an issue, when the size of the problem is very large and the computational burden is significant. Some Neural probabilistic language models are  relational models. Due to the high-dimensional sample space, the evaluation of the partition function, which is needed for  normalization, may be intractable. Some of the methods of parameter estimation under such models are based on the removal of the partition function, that is,  the removal of the overall effect from the model and performing model training using the models without the overall effect. Approximations of estimates with and without the overall effect were studied, for example, by \cite{MnihTeh2012} and \cite{AndreasKlein2015}, among others.  A different approach to avoiding global normalization (i.e., having an overall effect) is described in \cite{ProbGraphM}. However, the implications of the removal of the overall effect are not discussed in the existing literature. 

Another area where removing or including the overall effect is relevant, is context specific independence models, see, e.g., \cite{HosgaardCSImodels} and \cite*{NymanCSI}. When the sample space is an incomplete Cartesian product, removing the overall effect, as described in this paper, specifies different variants of conditional independence in the parts of the sample space, depending on whether or not the part is or is not affected affected by the missing cells.

While including the overall in the definition of the statistical model to be investigated is seen by many researchers as ``natural' or ``harmless'', we show in this paper that adding or removing the overall effect may dramatically change the characteristics of the exponential family, up to the point of altering the fundamental model property intended by the researcher. 

The main results of the paper include showing that allowing the overall effect expands the curved exponential family to the smallest regular exponential family which contains it.  When the overall effect is removed, the sample space may have to be reduced (if there were cells which contained the overall effect only), and the changes in the structure of the generalized odds ratios defining the model are described in both cases. In the language of algebraic geometry, the procedure of removing the overall effect is identical to the dehomogenization of the variety defining the model \citep*{Cox}. An important area of applications of the results presented here is the case when several binary features are observed, but the combination that  no feature is present is either is impossible logically or is possible but was left out from the study design. The converse of dehomogenization, that is homogenizing a variety, involves including a new variable, and it is shown that in some cases this can be identified, from a statistical perspective, with augmenting the sample space by a cell which is characterized by no feature being present. For example, the Aitchison -- Silvey  independence \citep{AitchSilvey60,KRipf1} is homogenized, through the augmentation of the sample space, into the standard independence model.

The paper is organized as follows. Section \ref{SectionDefinition}  gives a canonical definition of relational models using homogeneous, and if there is no overall effect included, one inhomogeneous generalized odds ratios, called dual representation and shows that including the overall effect is identical to omitting the inhomogeneous generalized odds ratio from it.

Section \ref{SectionInfluence} contains the result that including the overall effect expands the curved exponential family into the smallest regular one containing it. For the case of the removal of the overall effect,  the dual representation of the model is given, and the relevance of certain results in algebraic geometry to the statistical problem is discussed. In particular, the homogenization of a variety through including a new variable is identified with augmenting the sample space with a cell where no feature is present, when this is meaningful. It is proved that the homogenization of the Aitchison -- Silvey (in the sequel, AS) independence model, which is defined on sample spaces where all combinations of features, except for the ``no feature present'' combination, are possible, is the usual model of mutual independence on the full Cartesian product obtained after augmenting the sample space with the missing cell. The relationship of these results with context specific independence is also described. 

Section \ref{MLEsection} compares the maximum likelihood (ML) estimates in geometrical terms for relational models with and without the overall effect and based on the insight obtained, a modification of the algorithm proposed in \cite{KRipf1} is given. It is illustrated, that the ML estimates under two models which differ only in the lack or presence of the overall term, may be very different, up to the point of the existence or no existence of positive ML estimates, when the data contain observed zeros. However, when the MLE exists in the model containing the overall effect, it also does in the model obtained after the removal of the overall effect.

Finally, Section \ref{hemato} discusses an example of applications of relational models in cell biology. The equal loss of potential model in hematopoiesis \citep*{Perie2014}  is a relational model  without the overall effect. The published analysis of this model added the overall effect to it, to simplify calculations, and with this changed the properties of the model so that the published estimates do not fulfill the fundamental model property.

\section{A canonical form of relational models}\label{SectionDefinition}

Let $Y_1, \dots, Y_K$ be  random variables taking values in finite sets $\mathcal{Y}_1, \dots, \mathcal{Y}_K$, respectively. Let the sample space $\mathcal{I}$ be a non-empty, proper or improper, subset of $\mathcal{Y}_1 \times \dots \times \mathcal{Y}_K$, written as a sequence of length $I=|\cg I|$ in the lexicographic order. Assume that the population distribution is parameterized by cell probabilities $\boldsymbol p =(p_1, \dots, p_I)$, where $p_i \in (0,1)$ and $\sum_{i=1}^I p_{i} = 1$, and denote by $\cg{P}$ the set of all strictly positive distributions on $\cg{I}$.  For simplicity of exposition, a distribution in $\cg {P}$ will be identified with its parameter, $\boldsymbol p$, and $\cg P = \{\bd p > \bd 0: \,\, \bd 1\tr \bd p = 1\}$.  

Let $\mathbf{A}$ be a $J \times I$ matrix of full row rank with 0--1 elements and no zero columns.  
A relational model for probabilities $RM(\mathbf{A})$ generated by $\mathbf{A}$ is the subset of $\mathcal{P}$ that satisfies:
\begin{equation}\label{genll}
RM(\mathbf{A}) = \{\bd  p \in \mathcal{P}: \,\, \log \bd{p} = \mathbf{A}\tr \bd{\theta}\},
\end{equation}
where $\bd \theta$ = $(\theta_1, \dots, \theta_J)\tr$ is the vector of log-linear parameters of the model.  A dual representation of a relational model can be obtained using a matrix, $\mathbf{D}$, whose rows form a basis of $Ker(\mathbf{A})$, and thus, $\mathbf{D}\mathbf{A}\tr$ = $\mathbf O$:
\begin{equation}\label{multFam}
RM(\mathbf{A}) = \{\bd p \in \cg P: \,\, {\mathbf D} \log \bd p = \bd 0\}.
\end{equation}
The number of the degrees of freedom $K$ of the model is equal to $dim(Ker(\mathbf{A}))$.  In the sequel, $\bd d_1\tr, \bd d_2\tr, \dots, \bd d_K\tr$ denote the rows of $\mathbf D$. The dual representation can also be expressed in terms of the generalized odds ratios:
\begin{equation}\label{pDRatio}
\boldsymbol p^{\boldsymbol d_1^+}/\boldsymbol p^{\boldsymbol d_1^-} = 1, \quad 
\boldsymbol p^{\boldsymbol d_2^+} /\boldsymbol p^{\boldsymbol d_2^-} = 1, \quad
\cdots \quad
\boldsymbol p^{\boldsymbol d_K^+} /\boldsymbol p^{\boldsymbol d_K^-} = 1,
\end{equation}
or in terms of the cross-product differences:
\begin{equation}\label{pDiff}
\boldsymbol p^{\boldsymbol d_1^+} - \boldsymbol p^{\boldsymbol d_1^-} = 0, \quad 
\boldsymbol p^{\boldsymbol d_2^+} - \boldsymbol p^{\boldsymbol d_2^-} = 0, \quad
\cdots \quad
\boldsymbol p^{\boldsymbol d_K^+} - \boldsymbol p^{\boldsymbol d_K^-} = 0,
\end{equation}
where ${\boldsymbol {d^+}}$ and ${\boldsymbol {d^-}}$ stand for, respectively, the positive and negative parts of a vector $\boldsymbol d$. 
The following dual representation is invariant of the choice of the kernel basis. 

Let $\mathcal{X}_{\mathbf A}$ denote the polynomial  variety associated with $\mathbf{A}$ \citep{SturBook}:
\begin{equation}\label{variety}
\mathcal{X}_{\mathbf A} = \left\{\boldsymbol p \in \mathbb{R}^{|\mathcal{I}|}_{\geq 0}: \,\, \boldsymbol p^{\boldsymbol d^+} = \boldsymbol p^{\boldsymbol d^-}, \,\,  \forall  \boldsymbol d \in Ker(\mathbf{A}) \right \}.
\end{equation}
The relational model generated by $\mathbf{A}$ is  the following set of distributions: 
%the subset of $\bar{\mathcal P}$ 
\begin{equation}\label{ExtMprob}
RM(\mathbf{A}) = \mathcal{X}_{\mathbf A} \cap int(\Delta_{I-1}),
\end{equation} 
where $int(\Delta_{I-1})$ is the interior of the $(I-1)$-dimensional simplex. 
%Because $\boldsymbol 1\tr \boldsymbol d_1 = -1$, one has ${\boldsymbol 1}\tr \boldsymbol  {d_1^+} \neq {\boldsymbol 1}\tr \boldsymbol  {d_1^-}$, and therefore, the variety $\mathcal{X}_{\mathbf A}$ is non-homogeneous.

Notice that the variety $\mathcal{X}_{\mathbf A}$ includes elements $\boldsymbol p$ with zero components as well and can be used to extend the definition of the model to allow zero probabilities. The extended relational model, $\xbar{RM}(\mathbf{A})$, is  the intersection of the variety $\mathcal{X}_{\mathbf A}$ with the probability simplex:
%the subset of $\bar{\mathcal P}$ 
\begin{equation}\label{ExtMprob}
\,\xbar{RM}(\mathbf{A}) = \boldsymbol p \in \mathcal{X}_{\mathbf A} \cap \Delta_{I-1}.
\end{equation} 
See \cite{KRextended} for more detail on the extended relational models.

% The support of such elements, $supp(\boldsymbol p) = \{i \in \mathcal{I}: \,\, p_i > 0\}$  can be characterized using the concept of a facial set \citep[cf.][]{KRextended}.

Let $\bd 1\tr$ = $(1, \dots,1)$ be the row of $1$'s of length $I$. If $\boldsymbol 1\tr$ does not belong to the space spanned by the rows of $\mathbf A$, the relational model  $RM(\mathbf{A})$ is said to be a model without the overall effect. Such models are specified using homogeneous and at least one non-homogeneous generalized odds ratios, and the corresponding variety $\mathcal{X}_{\mathbf A}$ is non-homogeneous \citep{KRD11}.
\begin{proposition}\label{oneORtheorem}
Let $RM(\mathbf A)$ be a model without the overall effect. There exists a kernel basis matrix $\mathbf{D}$ whose rows satisfy:
\begin{equation}\label{KernelRows}
\bd d_1\tr  \bd 1  \neq 0, \quad \bd d_2\tr \bd 1=  0, \quad \dots, \,\, \bd d_K\tr \bd 1 = 0.
\end{equation}
%Equivalently, a relational model without the overall effect has a dual representation with exactly one non-homogeneous log-linear constraint. }
\end{proposition}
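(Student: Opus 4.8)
The plan is to regard $\bd d \mapsto \bd d\tr \bd 1$ as a linear functional on the $K$-dimensional vector space $Ker(\mathbf A)$ (recall $\dim Ker(\mathbf A) = I - J = K$, since $\mathbf A$ has full row rank $J$) and to build $\mathbf D$ by splitting off a basis of the kernel of this functional. Let $\mathbf A^{+}$ be the $(J+1)\times I$ matrix obtained by appending the row $\bd 1\tr$ to $\mathbf A$. The defining hypothesis — that $RM(\mathbf A)$ has no overall effect — says exactly that $\bd 1\tr$ is not in the row space of $\mathbf A$, so $\mathrm{rank}(\mathbf A^{+}) = J+1$ and hence $\dim Ker(\mathbf A^{+}) = I - (J+1) = K - 1$. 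Since $Ker(\mathbf A^{+})$ is precisely $\{\bd d \in Ker(\mathbf A): \bd d\tr \bd 1 = 0\}$, the kernel of the functional $\bd d \mapsto \bd d\tr\bd 1$ on $Ker(\mathbf A)$ is a proper subspace, so the functional is not identically zero.

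Next I would pick any $\bd d_1 \in Ker(\mathbf A)$ with $\bd d_1\tr\bd 1 \neq 0$ (possible by the previous step) together with a basis $\bd d_2\tr, \dots, \bd d_K\tr$ of the $(K-1)$-dimensional space $Ker(\mathbf A^{+})$, so that $\bd d_j\tr \bd 1 = 0$ for $j = 2, \dots, K$. To see that $\bd d_1, \dots, \bd d_K$ form a basis of $Ker(\mathbf A)$ it suffices to check linear independence: from $\sum_{k=1}^{K} c_k \bd d_k = \bd 0$, left-multiplication by $\bd 1\tr$ kills $\bd d_2, \dots, \bd d_K$ and yields $c_1(\bd d_1\tr \bd 1) = 0$, hence $c_1 = 0$, and then $c_2 = \dots = c_K = 0$ by the independence of $\bd d_2, \dots, \bd d_K$. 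Taking $\mathbf D$ to have rows $\bd d_1\tr, \dots, \bd d_K\tr$ then gives a kernel basis matrix satisfying \eqref{KernelRows}.

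I do not anticipate a genuine obstacle here: the argument is a short dimension count followed by a one-line independence check. The one place that must be handled with care — and the only place where the ``without the overall effect'' assumption enters — is the rank identity $\mathrm{rank}(\mathbf A^{+}) = J+1$; were $\bd 1\tr$ instead in the row space of $\mathbf A$, every vector of $Ker(\mathbf A)$ would be orthogonal to $\bd 1$ and no admissible $\bd d_1$ could exist. For completeness it is also worth remarking that the form \eqref{KernelRows} is optimal, in the sense that, since $Ker(\mathbf A)\cap\{\bd d:\bd d\tr\bd 1 = 0\}$ has dimension only $K-1$, at least one row of every kernel basis matrix must have a nonzero entry sum.
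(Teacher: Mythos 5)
Your proof is correct, but it takes a different route from the paper's. The paper starts from an arbitrary kernel basis matrix $\mathbf D$ and performs row elimination: having found one row $\bd d_1\tr$ with $C_1=\bd d_1\tr\bd 1\neq 0$ (which exists because a model without the overall effect must be specified by at least one non-homogeneous odds ratio), it replaces any other row $\bd d_2\tr$ with $C_2\neq 0$ by $C_2\bd d_1\tr - C_1\bd d_2\tr$, which is orthogonal to $\bd 1$ and preserves linear independence, and repeats. You instead construct the basis directly by a dimension count: appending $\bd 1\tr$ to $\mathbf A$ raises the rank to $J+1$ precisely because the model has no overall effect, so $Ker(\mathbf A^{+})=Ker(\mathbf A)\cap\{\bd d:\bd d\tr\bd 1=0\}$ has dimension $K-1$, whence the functional $\bd d\mapsto\bd d\tr\bd 1$ is nonzero on $Ker(\mathbf A)$ and one can adjoin a single non-orthogonal vector to a basis of $Ker(\mathbf A^{+})$. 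The paper's sweep-out is the more convenient recipe when a kernel basis matrix is already in hand (as it typically is computationally), while your argument makes explicit that $\bd d_2,\dots,\bd d_K$ span $Ker(\bar{\mathbf A})$ for the augmented matrix $\bar{\mathbf A}$ — a fact the paper only establishes separately later, in the proof of Theorem \ref{OEnonhom} — and your closing remark that every kernel basis matrix must contain at least one row with nonzero entry sum is an immediate bonus of the dimension count that the elimination argument does not deliver by itself.
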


\begin{proof}
A relational model does not contain the overall effect if and only if it can be written using non-homogeneous (and possibly homogeneous) generalized odds ratios \citep{KRD11}. Therefore, $\mathbf{D}$ has at least one row, say $\bd d_1\tr$, that is not orthogonal to $\bd 1$: $C_1 = \bd d_1\tr \bd 1\neq 0$.

Suppose there exists another row, say $\boldsymbol d_2\tr$, that is not orthogonal to $\boldsymbol 1$ and thus $C_2 =  \boldsymbol d_2\tr\boldsymbol 1 \neq 0$. The vectors $\boldsymbol d_1$ and $\boldsymbol d_2$ are linearly independent, so are the vectors
$\boldsymbol d_1$ and $C_2\boldsymbol d_1 - C_1 \boldsymbol d_2$. Substitute the row  $\boldsymbol d_2\tr$ with the row $C_2\boldsymbol d_1\tr - C_1 \boldsymbol d_2\tr$. And so on.
\end{proof}
\noindent It is assumed in the sequel that $\boldsymbol 1\tr$ is not in the row space of $\mathbf{A}$.  Notice that, because $\mathbf{A}$ is 0--1 matrix without zero columns, this is only possible when $2 \leq J = rank(\mathbf{A})  < I-1$. Throughout the entire paper, the kernel basis matrix $\mathbf D$ is assumed to satisfy (\ref{KernelRows}), and, without loss of generality, $\bd d_1\tr\bd 1=-1$. 

Some consequences of adding the overall effect to a relational model will be investigated by comparing the properties of the relational model generated by $\mathbf{A}$ and the model generated by the matrix $\bar{\mathbf A}$ obtained by augmenting the model matrix $\mathbf{A}$ with the row $\boldsymbol 1'$: 
$$\bar{\mathbf A} = \left( \begin{array}{c} \boldsymbol 1' \\ \mathbf{A} \end{array} \right).$$
Let $RM(\bar{\mathbf{A}})$ be the relational model generated by $\bar{\mathbf{A}}$. Because $\boldsymbol 1'$ is a row of $\bar{\mathbf{A}}$, the corresponding polynomial variety $\mathcal{X}_{\bar{\mathbf A}}$ is homogeneous \citep[cf.][]{SturBook}. 

%Let $\bar{\mathbf{D}}$ be a kernel basis matrix of $\bar{\mathbf{A}}$. The dual representation of the model can be written as:
%\begin{equation}\label{multFam1}
%\{\bd p \in \cg P: \,\, \bar{\mathbf D} \log \bd p = \bd 0\}.
%\end{equation}

\begin{theorem}\label{OEnonhom}
The dual representation of $RM(\bar{\mathbf{A}})$  can be obtained from the dual representation of $RM({\mathbf{A}})$  by removing  the constraint specified by a non-homogeneous odds ratio from the latter. 
\end{theorem}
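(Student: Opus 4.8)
The plan is to reduce the statement to a short linear-algebra fact about kernels, combined with the normalization supplied by Proposition~\ref{oneORtheorem}. The starting point is the identity
\[
Ker(\bar{\mathbf A}) \;=\; \{\bd v \in Ker(\mathbf A):\ \bd 1\tr\bd v = 0\}\;=\; Ker(\mathbf A)\cap\{\bd 1\}^{\perp},
\]
which is immediate from the block form $\bar{\mathbf A} = \left(\begin{array}{c}\bd 1\tr\\ \mathbf A\end{array}\right)$: a vector is annihilated by $\bar{\mathbf A}$ exactly when it is annihilated by $\mathbf A$ and is orthogonal to $\bd 1$. Since $\mathbf A$ has full row rank $J$, one has $dim\,Ker(\mathbf A)=I-J=K$; and since $\bd 1\tr$ is not in the row space of $\mathbf A$, the matrix $\bar{\mathbf A}$ has full row rank $J+1$, so $dim\,Ker(\bar{\mathbf A}) = I-J-1 = K-1$ (recall $K\ge 2$ under the standing assumptions of the paper).

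Next I would exhibit an explicit kernel basis matrix for $\bar{\mathbf A}$ sitting inside the one provided by Proposition~\ref{oneORtheorem}. Fix a kernel basis matrix $\mathbf D$ of $\mathbf A$ whose rows satisfy (\ref{KernelRows}), normalized so that $\bd d_1\tr\bd 1 = -1$ and $\bd d_j\tr\bd 1 = 0$ for $j=2,\dots,K$. Then $\bd d_2,\dots,\bd d_K$ all lie in $Ker(\mathbf A)\cap\{\bd 1\}^{\perp}=Ker(\bar{\mathbf A})$, they are linearly independent (being rows of $\mathbf D$), and there are $K-1$ of them, which equals $dim\,Ker(\bar{\mathbf A})$. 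Hence the matrix $\bar{\mathbf D}$ with rows $\bd d_2\tr,\dots,\bd d_K\tr$ is a kernel basis matrix of $\bar{\mathbf A}$, and by (\ref{multFam}) the dual representation of $RM(\bar{\mathbf A})$ is $\bar{\mathbf D}\log\bd p = \bd 0$, i.e. exactly the constraints (\ref{pDRatio})--(\ref{pDiff}) indexed by $j=2,\dots,K$.

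It then remains to identify which constraint has been removed. For $j\ge 2$ the equality $\bd d_j\tr\bd 1 = 0$ says $\bd 1\tr\bd d_j^{+} = \bd 1\tr\bd d_j^{-}$, i.e. the monomials $\bd p^{\bd d_j^{+}}$ and $\bd p^{\bd d_j^{-}}$ have the same total degree, so these are the homogeneous generalized odds ratios; whereas $\bd d_1\tr\bd 1 = -1\ne 0$ forces $\bd p^{\bd d_1^{+}}$ and $\bd p^{\bd d_1^{-}}$ to have different total degrees, so $\bd p^{\bd d_1^{+}}/\bd p^{\bd d_1^{-}}=1$ is the (unique, after the Proposition~\ref{oneORtheorem} normalization) non-homogeneous odds ratio in the dual representation of $RM(\mathbf A)$. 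Thus passing from $RM(\mathbf A)$ to $RM(\bar{\mathbf A})$ amounts precisely to deleting this one non-homogeneous constraint. Finally, since the model specified by a dual representation does not depend on the chosen kernel basis, the same conclusion holds starting from an arbitrary dual representation of $RM(\mathbf A)$: bring it into the form (\ref{KernelRows}) by Proposition~\ref{oneORtheorem}, and then drop the non-homogeneous generator.

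The argument is essentially a dimension count, so I do not expect a serious obstacle; the steps that need a little care are bookkeeping ones. One should verify that $\bar{\mathbf A}$ is again an admissible model matrix --- it is $0$--$1$, it has no zero column (it contains the all-ones row), and it has full row rank --- so that $RM(\bar{\mathbf A})$ is a relational model, and, by construction, one with the overall effect. One should also be explicit that ``removing the constraint specified by a non-homogeneous odds ratio'' is unambiguous, which is why Proposition~\ref{oneORtheorem} is invoked first: it lets us reduce to a dual representation containing exactly one non-homogeneous generator. The one genuinely conceptual point, as opposed to a computational one, is that the reduced, purely homogeneous system defines the \emph{larger} family $RM(\bar{\mathbf A})$ rather than merely ``$RM(\mathbf A)$ with one equation omitted at random'' --- and this is exactly what the kernel identity $Ker(\bar{\mathbf A}) = Ker(\mathbf A)\cap\{\bd 1\}^{\perp}$, together with the dimension count, guarantees.
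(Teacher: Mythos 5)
Your proposal is correct and follows essentially the same route as the paper's proof: both take the normalized kernel basis from Proposition~\ref{oneORtheorem}, observe that $\bd d_2,\dots,\bd d_K$ lie in $Ker(\bar{\mathbf A})$ and form a basis of it by a rank/dimension count, and identify the dropped constraint as the unique non-homogeneous one via $\bd d_1\tr\bd 1=-1$. Your explicit statement of the identity $Ker(\bar{\mathbf A})=Ker(\mathbf A)\cap\{\bd 1\}^{\perp}$ and of $\dim Ker(\bar{\mathbf A})=K-1$ merely makes transparent what the paper leaves implicit when it asserts that $rank(\bar{\mathbf D})=K-1$ suffices.
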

\begin{proof}
Write the dual representation of $RM({\mathbf{A}})$ in terms of the generalized  log odds ratios:
\begin{equation}\label{multFam1d}
\bd d_1\tr\log \bd p = 0, \quad \bd d_2\tr\log \bd p = 0, \dots, \quad \bd d_K\tr\log \bd p = 0, \quad \mbox{ for any } \,\, \boldsymbol p \in RM({\mathbf{A}}).
\end{equation}
By the previous assumption, $\bd d_1\tr\bd 1=-1$, and thus, the constraint $\bd d_1\tr\log \bd p = 0$ is specified by a non-homogeneous odds ratio.  Define $\bar{\mathbf{D}}$ as:
$$\bar{\mathbf{D}} = \left( \begin{array}{c} \boldsymbol d_2\tr \\ \vdots \\ \boldsymbol d_K\tr  \end{array} \right).$$
Because, from (\ref{KernelRows}), $\bd d_2, \dots, d_K \in Ker(\mathbf{A})$, 
$$\bar{\mathbf D} \bar{\mathbf A}\tr =   \left( \begin{array}{c} \boldsymbol d_2\tr \\ \vdots \\ \boldsymbol d_K\tr  \end{array} \right) \left( \begin{array}{cc} \boldsymbol 1 & {\mathbf A}\tr \end{array} \right) =   \left( \begin{array}{cc} \boldsymbol d_2\tr \boldsymbol 1  &  \boldsymbol d_2\tr  \mathbf{A}' \\ \vdots & \vdots \\ \boldsymbol d_K\tr \boldsymbol 1  &  \boldsymbol d_K\tr  \mathbf{A}' \end{array} \right) = \mathbf O,$$
and thus, $\bd d_2, \dots, d_K \in Ker(\bar{\mathbf{A}})$. Finally, as $rank(\bar{\mathbf{D}}) = K-1$, $\bd d_2, \dots, d_K$ is a basis of $Ker(\bar{\mathbf{A}})$, and therefore, 
\begin{equation}\label{multFam1d1}
 \bd d_2\tr\log \bd p = 0, \dots, \quad \bd d_K\tr\log \bd p = 0, \quad \mbox{ for any } \,\, \boldsymbol p \in RM(\bar{\mathbf{A}}).
\end{equation}
%is a dual representation of $RM(\bar{\mathbf{A}})$.
\end{proof}
%How adding the overall effect to a relational model affects the properties of the MLE is studied next.

%
\section{The influence of the overall effect on the model structure}\label{SectionInfluence}

The consequences of adding or removing the overall effect will be studied separately. The changes in the model structure after the overall effect is added  are considered first.

 Let $RM(\mathbf{A})$ be a relational model without the overall effect and $RM(\bar{\mathbf{A}})$ be the corresponding augmented model. Let $\mathbf{A} = (a_{ji})$ for $j = 1, \dots, J$, $i = 1, \dots, I$. For any $\boldsymbol p \in RM(\bar{\mathbf{A}})$:
 $$\log p_i = \theta_0 + a_{1i}\theta_1 + \dots + a_{Ji} \theta_J,$$
 where $\theta_j = \theta_j(\boldsymbol p)$, $j = 0, 1,\dots, J$, are the log-linear parameters of $\boldsymbol p$. In particular, $\theta_0(\boldsymbol p)$ is the overall effect of $\boldsymbol p$. 

 \begin{theorem}\label{newTheoremVarieties}
The augmented model, $RM(\bar{\mathbf{A}})$, is the minimal regular exponential family which contains $RM(\mathbf{A})$, and 
$$ RM(\mathbf{A}) = \{\boldsymbol p \in RM(\bar{\mathbf{A}}): \,\, \theta_0(\boldsymbol p) = 0 \}.$$
% \begin{enumerate}[(i)]
 %\item For a fixed $\varsigma > 0$, the submodel of $RM(\bar{\mathbf{A}})$:
 %$$\mathcal{M}_{\varsigma} = \{\boldsymbol p \in RM(\bar{\mathbf{A}}): \,\, \alpha_0(\boldsymbol p) = 1/\varsigma \}$$ is a curved exponential subfamily of $RM(\bar{\mathbf{A}})$.
% \item $RM(\mathbf{A}) = \mathcal{M}_1$.
 %\end{enumerate}
 \end{theorem}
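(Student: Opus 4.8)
The plan is to prove the two assertions of the theorem separately. Write $\cg R(\mathbf B)$ for the row space of a matrix $\mathbf B$, and recall the standard coordinate-free description of a regular exponential family over the finite sample space $\cg I$: any such family has the form $\cg F_W=\{\boldsymbol p\in\cg P:\ \log\boldsymbol p\in W\}$ for some affine subspace $W\subseteq\mathbb R^{I}$ whose direction space $W-W$ contains $\boldsymbol 1$ (with a uniform carrier measure $W$ may be taken to be a linear subspace containing $\boldsymbol 1$; a non-uniform carrier measure only translates $W$ by a fixed vector, which will not affect the argument). Since $\boldsymbol 1\tr$ is a row of $\bar{\mathbf A}$, the row space $\cg R(\bar{\mathbf A})$ is a linear subspace containing $\boldsymbol 1$, so $RM(\bar{\mathbf A})=\cg F_{\cg R(\bar{\mathbf A})}$ is a regular exponential family, and, since $\cg R(\mathbf A)\subseteq\cg R(\bar{\mathbf A})$, it contains $RM(\mathbf A)$. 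Moreover, because $\mathbf A$ has full row rank and, by the standing assumption, $\boldsymbol 1\notin\cg R(\mathbf A)$, the matrix $\bar{\mathbf A}$ has full row rank $J+1$; hence the log-linear parameters $\theta_0(\boldsymbol p),\dots,\theta_J(\boldsymbol p)$ are uniquely determined for every $\boldsymbol p\in RM(\bar{\mathbf A})$.

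The identity $RM(\mathbf A)=\{\boldsymbol p\in RM(\bar{\mathbf A}):\ \theta_0(\boldsymbol p)=0\}$ then follows at once from the block form $\bar{\mathbf A}\tr=(\,\boldsymbol 1\ \ \mathbf A\tr\,)$. If $\boldsymbol p\in RM(\mathbf A)$, then $\log\boldsymbol p=\mathbf A\tr\boldsymbol\theta=\bar{\mathbf A}\tr(0,\boldsymbol\theta\tr)\tr$, so $\boldsymbol p\in RM(\bar{\mathbf A})$ and, by the uniqueness of the parameters, $\theta_0(\boldsymbol p)=0$. Conversely, if $\boldsymbol p\in RM(\bar{\mathbf A})$ with $\theta_0(\boldsymbol p)=0$, then $\log\boldsymbol p=\bar{\mathbf A}\tr\bigl(0,\theta_1(\boldsymbol p),\dots,\theta_J(\boldsymbol p)\bigr)\tr=\mathbf A\tr\bigl(\theta_1(\boldsymbol p),\dots,\theta_J(\boldsymbol p)\bigr)\tr$, so $\boldsymbol p\in RM(\mathbf A)$. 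This part is routine once the well-definedness of $\theta_0(\cdot)$ noted above is in hand.

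The substantive point is minimality. Let $\cg F=\cg F_W$ be any regular exponential family on $\cg I$ with $RM(\mathbf A)\subseteq\cg F$, so that $W$ is affine, $\boldsymbol 1\in W-W$, and $W\supseteq\{\log\boldsymbol p:\ \boldsymbol p\in RM(\mathbf A)\}$; hence $W$ contains the affine hull of the latter set. The key lemma I would prove is that this affine hull equals $\cg R(\mathbf A)$. Under the linear isomorphism $\boldsymbol\theta\mapsto\mathbf A\tr\boldsymbol\theta$ from $\mathbb R^{J}$ onto $\cg R(\mathbf A)$, the set $\{\log\boldsymbol p:\ \boldsymbol p\in RM(\mathbf A)\}$ corresponds to the level set $\{\boldsymbol\theta\in\mathbb R^{J}:\ g(\boldsymbol\theta)=1\}$ of $g(\boldsymbol\theta)=\sum_{i=1}^{I}\exp\bigl((\mathbf A\tr\boldsymbol\theta)_i\bigr)$. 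Since $\mathbf A$ has full row rank, $g$ is strictly convex; since $\mathbf A$ has no zero row, $\nabla g$ is everywhere strictly positive, hence nonvanishing; and since $\mathbf A$ has no zero column, $g$ has range $(0,\infty)$, so $\{g=1\}$ is a nonempty smooth embedded hypersurface in $\mathbb R^{J}$. It cannot lie in an affine hyperplane: being an embedded hypersurface it would then be relatively open in that hyperplane, and with $J\geq 2$ such a relatively open set contains line segments on which the strictly convex $g$ would be constant, a contradiction; and, being $(J-1)$-dimensional, it is not contained in any lower-dimensional affine subspace either. Hence its affine hull is $\mathbb R^{J}$, and, transporting back, $W\supseteq\cg R(\mathbf A)$. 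Consequently $W-W\supseteq\cg R(\mathbf A)$, and since also $\boldsymbol 1\in W-W$, we obtain $W-W\supseteq\cg R(\mathbf A)+\mathrm{span}\{\boldsymbol 1\}=\cg R(\bar{\mathbf A})$. As $W$ meets $\cg R(\mathbf A)\subseteq\cg R(\bar{\mathbf A})$ (it contains $\log\boldsymbol p$ for any $\boldsymbol p\in RM(\mathbf A)$), it follows that $W\supseteq\cg R(\bar{\mathbf A})$, and therefore $\cg F=\cg F_W\supseteq\cg F_{\cg R(\bar{\mathbf A})}=RM(\bar{\mathbf A})$, which is the asserted minimality.

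I expect the affine-hull lemma to be the only genuine difficulty: it says that the curved family $RM(\mathbf A)$ is ``log-affinely full'' inside $\cg R(\mathbf A)$, so that no regular exponential family can fit strictly between $RM(\mathbf A)$ and the homogenized model $RM(\bar{\mathbf A})$, and it is exactly here that the strict convexity of the normalization and the absence of zero rows and columns in $\mathbf A$ are used. A subsidiary, essentially bookkeeping, issue is to state carefully the coordinate-free form of regular exponential families over a finite sample space and to observe that it is insensitive to the carrier measure, since the argument uses $W$ only through its direction space together with the fact that it meets $\cg R(\mathbf A)$. Theorem~\ref{OEnonhom} is not needed for this argument, but it gives the complementary dual picture: the single inhomogeneous generalized odds ratio that is dropped in passing from $RM(\mathbf A)$ to $RM(\bar{\mathbf A})$ is precisely the constraint that cuts the $J$-dimensional family $RM(\bar{\mathbf A})$ down to the $(J-1)$-dimensional curved family $RM(\mathbf A)$ inside it.
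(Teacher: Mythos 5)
Your proof is correct, and while the easy half is handled essentially as in the paper, your treatment of the minimality claim takes a genuinely different and more complete route. For the identity $RM(\mathbf A)=\{\boldsymbol p\in RM(\bar{\mathbf A}):\,\theta_0(\boldsymbol p)=0\}$ the two arguments are interchangeable: the paper verifies it by computing $\mathbf D\log\boldsymbol p=(-\theta_0(\boldsymbol p),\boldsymbol 0\tr)\tr$ with the kernel basis matrix of Proposition~\ref{oneORtheorem}, whereas you read it off from the block form of $\bar{\mathbf A}\tr$ after observing that $\boldsymbol 1\notin\cg R(\mathbf A)$ makes $\bar{\mathbf A}$ of full row rank, so that $\theta_0(\cdot)$ is well defined. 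For minimality the paper stays in the natural parameter space: it exhibits $RM(\mathbf A)$ as the curved family parameterized by the normalization surface $\Theta=\{\boldsymbol\theta:\boldsymbol 1\tr\exp(\mathbf A\tr\boldsymbol\theta)=1\}$ and asserts in one line that $\Theta_1$ is ``the smallest open set containing $\Theta$'' and hence parameterizes the minimal regular family --- a claim that is left unproved and is, as literally stated, imprecise (the graph $\Theta_1\subset\mathbb R^{J+1}$ is not open, and a non-open set has no smallest open neighbourhood). You instead work in log-probability space, characterize regular exponential families as $\cg F_W$ for affine $W$ with $\boldsymbol 1\in W-W$, and isolate the real content as an explicit lemma: the affine hull of the level set $\{g=1\}$ of the strictly convex normalizer $g(\boldsymbol\theta)=\boldsymbol 1\tr\exp\{\mathbf A\tr\boldsymbol\theta\}$ is all of $\mathbb R^J$, proved via the regular level set theorem together with the fact that a strictly convex function cannot be constant on a nondegenerate segment. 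This is precisely the point the paper glosses over, and your version makes visible where the standing assumptions (full row rank, no zero columns, $J\ge 2$) actually enter. The cost is a longer argument and the bookkeeping about carrier measures; the benefit is that the minimality statement is actually justified rather than asserted.
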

 
 \begin{proof}

  The second claim is proved first. Denote $\mathcal{M}_{0} = \{\boldsymbol p \in RM(\bar{\mathbf{A}}): \,\, \theta_0(\boldsymbol p) = 0 \}$.  Let $\mathbf{D}$ be a kernel basis matrix of $\mathbf{A}$, having the form (\ref{KernelRows}), and notice that
 $$ \mathbf{D} \log \boldsymbol p = \left(\begin{array}{c}\bd d_1\\[3pt] \bar{\mathbf{D}}\end{array}\right)\log \boldsymbol p=\left(\begin{array}{c}\theta_0(\boldsymbol {p})\bd d_1\tr \bd 1\\[3pt] \bar{\mathbf{D}}\mathbf{A}\tr \boldsymbol \theta\end{array}\right) = \left(\begin{array}{c}- \theta_0(\boldsymbol {p})\\[3pt] \bd 0\end{array}\right).$$ 
Therefore, any $\boldsymbol p \in \mathcal{M}_0$, satisfies $ \mathbf{D} \log \boldsymbol p = \boldsymbol 0$, and thus, belongs to  $ RM(\mathbf{A})$.   On the other hand, for any $\boldsymbol p \in  RM({\mathbf{A}})$, both $\bar{\mathbf{D}}\log\boldsymbol p = \boldsymbol 0$ and  $\theta_0(\boldsymbol {p}) = 0$ must hold, which immediately implies that $\boldsymbol p \in \mathcal{M}_0$.

 The first claim is proved next.
%Now, let $\boldsymbol p \in RM(\mathbf{A})$. Then, $\bd 1\tr \boldsymbol p = 1$ and $\mathbf{D} \log p = \boldsymbol 0$. Because $\boldsymbol  p$ satisfies all the homogeneous constraints, $\bar{\mathbf{D}} \log \boldsymbol p = \boldsymbol 0$, it is also in the augmented model $RM(\bar{\mathbf{A}})$. Finally, because $\bd d_1\tr \log \boldsymbol p = 0$, the overall effect $\theta_0(\boldsymbol p) = 0$, and thus,  $\boldsymbol p \in \mathcal{M}_0$.

The relational model $RM({\mathbf{A}})$ is a curved exponential family parameterized by
\begin{equation*}
\Theta = \{\bd  \theta\tr = (\theta_1, \dots, \theta_J)\tr \in \mathbb{R}^{J}: \,\, \bd 1\tr \exp\{\mathbf{A}\tr \bd{\theta}\} = 1\}.
\end{equation*}
If the overall effect is added to   $RM({\mathbf{A}})$, the parameter space gets an additional parameter:
 \begin{equation*}
\Theta_1 = \{(\theta_0, \bd \theta\tr) = (\theta_0, \theta_1, \dots, \theta_J)\tr \in \mathbb{R}^{J+1}: \,\, \theta_0 = -\log(\bd 1\tr \exp\{\mathbf{A}\tr \bd{\theta}\})\}.
\end{equation*}
Because $\Theta_1$ is the smallest open set in $\mathbb{R}^J$  that contains $\Theta$, it parameterizes the minimal regular exponential family containing $RM(\mathbf{A})$. This family is, in fact, $RM(\bar{\mathbf{A}})$.  
 \end{proof}

\begin{example}\label{NewExample}
The relational models generated by the matrices
$$
\mathbf{A}= \left(\begin{array}{rrrr}
1& 1& 1& 0\\
0& 0& 1& 1\\
\end{array}\right), \qquad \bar{\mathbf{A}}= \left(\begin{array}{rrrr}
1& 1& 1& 1\\
1& 1& 1& 0\\
0& 0& 1& 1\\
\end{array}\right)
$$
consist of positive probability distribution which can be written in the following forms:
$$\left\{\begin{array}{l}
p_1 = \alpha_1,\\
p_2 = \alpha_1,\\
p_3 = \alpha_1\alpha_2,\\
p_4 = \alpha_2, 
\end{array} \right.     \qquad 
\left\{\begin{array}{l}
p_1 = \beta_0\beta_1,\\
p_2 = \beta_0\beta_1,\\
p_3 = \beta_0\beta_1\beta_2,\\
p_4 = \beta_0\beta_2,
\end{array} \right. 
$$
where $\beta_0$ is the overall effect. The dual representations can be written in the log-linear form, using $\bd d_1 = (-1,0,1,-1)\tr , \bd d_2 = (-1,1,0,0)\tr \in Ker(\mathbf{A})$:
$$\left\{\begin{array}{l}
\bd d_1\tr \log  \bd p = 0,\\
\bd d_2\tr \log  \bd p = 0,
\end{array} \right.     \qquad 
\left\{\begin{array}{l}\bd d_2\tr \log  \bd p = 0. \\
\end{array} \right. 
$$
By Theorem \ref{OEnonhom}, after the overall effect is added,  the model specification does not include the non-homogeneous constraint anymore. In terms of the generalized odds ratios:
$$\left\{\begin{array}{l}
p_3 /( p_1p_4) = 1,\\
p_1 / p_2 = 1, 
\end{array} \right.     \qquad 
\left\{\begin{array}{l}p_1 / p_2 = 1. \\
\end{array} \right. 
$$
The second model may be defined using restrictions only on homogeneous odds ratios, and there is no need to place  an explicit restriction on the non-homogeneous odds ratio. %Under the parameterization above,  $p_3/(p_1p_4) = 1/\beta_0 = 2\beta_1 + \beta_1\beta_2 + \beta_2$. The latter implies that the distributions whose overall effect is equal to 1 form a submodel without the overall effect as a subset of the model in which the overall effect is present.

\qed
\end{example}

The changes in the model structure after the overall effect is removed are examined next.  

A relational model with the overall effect can be reparameterized so that its model matrix has a row of $1$'s, and because of full row rank, this vector is not spanned by the other rows.  
The implications of the removal of the overall effect will be investigated using  a model matrix of this structure, say $\bar{\mathbf{A}}_1$. By the removal of the row  $\boldsymbol 1\tr$, one may obtain a different model matrix on the same sample space, but it may happen that there exists a cell $i_0$, whose only parameter is the overall effect, and after its removal, the $i_0$-th column  contains zeros only. In such cases, to have a proper model matrix, such columns, that is such  cells, need to be removed.   Write $\mathcal{I}_0$ for the set of all such cells $i_0$, and let $I_0 = |\mathcal{I}_0|$. Then, the reduced model matrix, $\mathbf{A}_1$, is obtained from $\bar{\mathbf{A}}_1$ after removing the row of $1$'s  and deleting the columns which, after this, contain only zeros. This is a model matrix on   $\mathcal{I} \setminus \mathcal{I}_0$.   Without loss of generality, the matrix $\bar{\mathbf{A}}_1$ can be written as:
$$\bar{\mathbf{A}}_1 = \left(\begin{array}{cc} \boldsymbol 1_{(I - I_0)}\tr &\boldsymbol 1_{I_0}\tr \\
 \mathbf{A}_1 &\mathbf{O}_{(J-1)\times I_0}  \end{array} \right).$$

If the sample spaces of $RM(\bar{\mathbf{A}}_1)$ and $RM({\mathbf{A}}_1)$ are the same that is, when $\mathcal{I}_0$  is empty, the reduced model is the subset of the original one, consisting of the distributions whose overall effect is zero, see Theorem \ref{newTheoremVarieties}. If the sample space is reduced, the relationship between the kernel basis matrices is described in the next result.

\begin{theorem}\label{thConjecture1}
The following holds:
\begin{enumerate}[(i)]
\item $dim(Ker(\mathbf{A}_1))=dim(Ker(\bar{\mathbf{A}}_1))-I_0+1$.

\item The kernel basis matrix $\mathbf{D}_1$ of $\mathbf{A}_1$ may be obtained from the kernel basis matrix  $\bar{\mathbf{D}}_1$ of  $\bar{\mathbf{A}}_1$ by deleting the the columns in $ \mathcal{I}_0$ and then leaving out the redundant rows.

\end{enumerate}
\end{theorem}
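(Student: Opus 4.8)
The plan is to prove the two claims by elementary linear algebra, once the block structure of $\bar{\mathbf{A}}_1$ displayed above is exploited. First I would pin down the rank of $\mathbf{A}_1$. Writing $J$ for the number of rows of $\bar{\mathbf{A}}_1$, which has full row rank, and noting that its first row is $\boldsymbol 1\tr$, deleting that row leaves the $(J-1)\times I$ matrix $(\mathbf{A}_1 \mid \mathbf{O})$, whose rank drops by at most one and hence equals $J-1$; so $rank(\mathbf{A}_1)=J-1$. (Also $\mathbf{A}_1$ has no zero column: a zero column in a position $i\notin\mathcal{I}_0$ would mean the $i$-th column of $\bar{\mathbf{A}}_1$ is nonzero only in the overall-effect row, forcing $i\in\mathcal{I}_0$; so $\mathbf{A}_1$ is a proper model matrix.) Therefore $\dim Ker(\mathbf{A}_1) = (I-I_0)-(J-1)$ while $\dim Ker(\bar{\mathbf{A}}_1)=I-J$, and subtracting gives claim (i) at once.

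For (ii), I would split each $\boldsymbol v\in\mathbb{R}^I$ as $\boldsymbol v=(\boldsymbol v',\boldsymbol v'')$, with $\boldsymbol v'$ indexed by $\mathcal{I}\setminus\mathcal{I}_0$ and $\boldsymbol v''$ by $\mathcal{I}_0$. Multiplying out the blocks of $\bar{\mathbf{A}}_1$ shows that $\boldsymbol v\in Ker(\bar{\mathbf{A}}_1)$ iff $\mathbf{A}_1\boldsymbol v'=\boldsymbol 0$ and $\boldsymbol 1\tr\boldsymbol v'+\boldsymbol 1\tr\boldsymbol v''=0$; in particular $\boldsymbol v'\in Ker(\mathbf{A}_1)$. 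Hence the coordinate projection $\pi:\boldsymbol v\mapsto\boldsymbol v'$ sends $Ker(\bar{\mathbf{A}}_1)$ into $Ker(\mathbf{A}_1)$, and it is onto: given $\boldsymbol v'\in Ker(\mathbf{A}_1)$, because $I_0\geq 1$ one can choose $\boldsymbol v''\in\mathbb{R}^{I_0}$ with $\boldsymbol 1\tr\boldsymbol v''=-\boldsymbol 1\tr\boldsymbol v'$, and then $(\boldsymbol v',\boldsymbol v'')\in Ker(\bar{\mathbf{A}}_1)$ projects onto $\boldsymbol v'$. (The kernel of $\pi$ is $\{(\boldsymbol 0,\boldsymbol v'') : \boldsymbol 1\tr\boldsymbol v''=0\}$, of dimension $I_0-1$, so rank--nullity for $\pi$ gives an independent derivation of (i).)

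Now deleting from $\bar{\mathbf{D}}_1$ the columns indexed by $\mathcal{I}_0$ produces exactly the matrix whose rows are $\pi(\bar{\boldsymbol d})$ as $\bar{\boldsymbol d}$ runs over the rows of $\bar{\mathbf{D}}_1$; since these rows span $Ker(\bar{\mathbf{A}}_1)$ and $\pi$ is linear and surjective, their images span $Ker(\mathbf{A}_1)$. Keeping a maximal linearly independent subset of those rows — that is, discarding the redundant ones, among them any rows that became zero — leaves, by part (i), $\dim Ker(\mathbf{A}_1)=\dim Ker(\bar{\mathbf{A}}_1)-I_0+1$ independent vectors, hence a basis of $Ker(\mathbf{A}_1)$; this is a legitimate choice of $\mathbf{D}_1$, which proves (ii). The one point to be careful about is the surjectivity of $\pi$: it is what ensures that truncating a kernel basis of $\bar{\mathbf{A}}_1$ still spans all of $Ker(\mathbf{A}_1)$ rather than a proper subspace, and it is precisely where the hypothesis that the sample space genuinely shrinks, $I_0\geq 1$, is used. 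If in addition one wants $\mathbf{D}_1$ to obey the normalization (\ref{KernelRows}), this can be imposed afterwards by the row operations used in the proof of Proposition~\ref{oneORtheorem}, provided $RM(\mathbf{A}_1)$ has no overall effect.
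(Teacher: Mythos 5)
Your proof is correct, and for part (ii) it takes a genuinely cleaner route than the paper. Part (i) is essentially the paper's argument (rows of $\mathbf{A}_1$ inherit linear independence from $\bar{\mathbf{A}}_1$, then count). For part (ii), the paper works concretely: it applies an explicit non-singular transformation to a kernel basis of $\bar{\mathbf{A}}_1$ to put the $\mathcal{I}_0$-coordinates into a canonical form (unit vectors followed by zeros), reads off which truncated vectors survive as a basis, and must split into two cases according to whether $\mathbf{A}_1$ does or does not contain the overall effect. You instead observe that the coordinate projection $\pi$ onto the $\mathcal{I}\setminus\mathcal{I}_0$ positions maps $Ker(\bar{\mathbf{A}}_1)$ onto $Ker(\mathbf{A}_1)$ with kernel $\{(\boldsymbol 0,\boldsymbol v''): \boldsymbol 1^{\prime}\boldsymbol v''=0\}$ of dimension $I_0-1$; surjectivity (which needs $I_0\geq 1$, exactly the regime in which the theorem is stated) guarantees that the truncated rows of $\bar{\mathbf{D}}_1$ still span all of $Ker(\mathbf{A}_1)$, and rank--nullity re-derives (i) for free. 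This eliminates the paper's case distinction entirely and isolates the one genuinely delicate point (that truncation does not land in a proper subspace), at the cost of being less explicit about \emph{which} rows end up redundant --- the paper's canonical form shows concretely that exactly $I_0-1$ of the ``non-homogeneous'' rows can be dropped. One nitpick: in (i) you say the rank ``drops by at most one''; strictly, deleting a row of a full-row-rank matrix leaves rows that are still linearly independent, so the rank of the remaining $(J-1)$-row block is exactly $J-1$ --- your conclusion is right but the stated reason is slightly loose.
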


\begin{proof}
\begin{enumerate}[(i)]
\item Because $\bar{\mathbf{A}}_1$ is a $J \times I$ matrix of full row rank, $dim(Ker(\bar{\mathbf{A}}_1)) =  I-J$. The linear independence of its rows implies that the rows of $\mathbf{A}_1$ are also linearly independent. Therefore, because $\mathbf{A}_1$ is a $(J-1) \times (I-I_0)$ matrix, $dim(Ker(\mathbf{A}_1)) = I - I_0 - (J-1)$, which implies the result.  

\item Let  $\boldsymbol d_1, \boldsymbol d_2, \dots, {\boldsymbol d}_{I-J}$ be a kernel basis of $\bar{\mathbf{A}}_1$. Write 
$$\boldsymbol d_i = (\boldsymbol{u}_i\tr,\boldsymbol{v}_i\tr)\tr, \quad \mbox{ for } \quad i = 1,\dots, I-J.$$
Then,
$$\boldsymbol 0 = \bar{\mathbf{A}}_1\boldsymbol d_i  = \left(\begin{array}{cc} \boldsymbol 1_{(I - I_0)}\tr &\boldsymbol 1_{I_0}\tr \\
 \mathbf{A}_1 &\mathbf{O}  \end{array} \right) \left(\begin{array}{c}\boldsymbol{u}_i\\ \boldsymbol{v}_i\end{array}\right),  \quad \mbox{ for } \quad i = 1,\dots, I-J,$$
which implies that
\begin{eqnarray}\label{basisID}
\boldsymbol 1_{(I - I_0)}\tr\boldsymbol u_i + \boldsymbol 1_{I_0}\tr \boldsymbol v_i = 0, \quad \mathbf{A}_1 \boldsymbol u_i = \boldsymbol 0, \quad \mbox{ for } \quad i = 1,\dots, I-J.
\end{eqnarray}
Suppose $\mathbf{A}_1$ does not have the overall effect. Notice that each $\boldsymbol v_i$ has length $I_0$, and therefore, one can apply a non-singular linear transformation to the basis vectors $\boldsymbol d_1, \boldsymbol d_2, \dots, {\boldsymbol d}_{I-J}$ to reduce them to the form:
\begin{eqnarray*}
\boldsymbol d_1&=& (\boldsymbol u_1\tr, 1, 0,\dots, 0)\tr\\
\boldsymbol d_2 &=&(\boldsymbol u_2\tr, 0,  1,\dots, 0)\tr\\
\cdots&\\
\boldsymbol d_{I_0} &=&(\boldsymbol u_{I_0}\tr, 0,  0,\dots, 1)\tr\\
\boldsymbol d_{I_0+1} &=&(\boldsymbol u_{I_0+1}\tr, 0,  0,\dots, 0)\tr\\
\boldsymbol d_{I_0+2} &=&(\boldsymbol u_{I_0+2}\tr, 0,  0,\dots, 0)\tr\\
\cdots& \\
\boldsymbol d_{I-J} &=&(\boldsymbol u_{I-J}\tr, 0,  0,\dots, 0)\tr.
\end{eqnarray*}
The equations (\ref{basisID}) imply that 
\begin{equation*}
\boldsymbol 1_{(I - I_0)}\tr\boldsymbol u_i =-1, \,\, \mbox{ for } i = 1, \dots, I_0, \qquad \boldsymbol 1_{(I - I_0)}\tr\boldsymbol u_i = 0, \,\, \mbox{ for } i = I_0+1, \dots, I-J, 
\end{equation*}
$$ \mbox{ and }\quad \mathbf{A}_1 \boldsymbol u_i = \boldsymbol 0, \,\, \mbox{ for } i = 1, \dots, I-J.$$
The linear independence of $\boldsymbol d_{I_0+1}, \dots, \boldsymbol d_{I-J}$ in $\mathbb{R}^I$ entails the linear independence of 
$\boldsymbol u_{I_0+1}, \dots, \boldsymbol u_{I-J}$ in $\mathbb{R}^{I-I_0}$. Notice that  $\boldsymbol u_{1}, \dots, \boldsymbol u_{I_0}$ are jointly linearly independent from $\boldsymbol u_{I_0+1}, \dots, \boldsymbol u_{I-J}$, but not necessarily  linearly independent from each other. A kernel basis of $\mathbf{A}_1$ comprises $I-J-I_0+1$ linearly independent vectors in $Ker(\mathbf{A}_1)$, and, for example, $\boldsymbol u_{I_0},\boldsymbol u_{I_0+1}, \dots, \boldsymbol u_{I-J}$ form such a basis. Therefore, ${\mathbf{D}}_1$ can be derived from a kernel basis matrix of $\bar{\mathbf{A}}_1$ by removing the columns for $\mathcal{I}_0$ and leaving out the $I_0 - 1$ redundant rows.

Suppose $\mathbf{A}_1$ has the overall effect; without loss of generality, $\boldsymbol 1\tr$ is a row of  $\mathbf{A}_1$. In this case, (\ref{basisID}) implies that both $\boldsymbol 1_{(I - I_0)}\tr\boldsymbol u_i  = 0$ and  $\boldsymbol 1_{I_0}\tr \boldsymbol v_i = 0$, for $i = I_0+1, \dots, I-J$. Because $\boldsymbol u_i$'s and $\boldsymbol v_i$'s vary independently from each other, the linear independence of $\boldsymbol d_1, \boldsymbol d_2, \dots, {\boldsymbol d}_{I-J}$ will imply that $(\boldsymbol u_i, \boldsymbol 0)$, for $i = I_0+1, \dots, I-J$, are also  linearly independent in $\mathbb{R}^I$. Consequently, any $I-J-I_0+1$  vectors among $\boldsymbol u_i$'s are linearly independent in $\mathbb{R}^{I-I_0}$ and can form a kernel basis of $\mathbf{A}_1$.  Thus, as in the previous case, ${\mathbf{D}}_1$ can be derived from a kernel basis matrix of $\bar{\mathbf{A}}_1$ by removing the columns for $\mathcal{I}_0$ and leaving out the $I_0 - 1$ redundant rows.
\end{enumerate}
\end{proof}

The next two examples illustrate the theorem. 

\begin{example}\label{removeOEexample1}
Let $RM(\bar{\mathbf{A}}_1)$ be the relational model generated by 
$$
\bar{\mathbf{A}}_1= \left(\begin{array}{rrrrrr}
 1& 1& 1& 1 & 1&1\\
1& 0&  1& 0& 0 & 0\\
0& 1& 1& 1&0&0\\
\end{array}\right).
$$
Here, $\mathcal{I}_0 = \{5,6\}$. In terms of the generalized odds ratios the model can be written as:
$$\left\{\begin{array}{l}
p_3p_5 / (p_1 p_2)= 1,\\
p_3 p_6/( p_1p_2) = 1,\\
p_2/p_4 = 1.
 \end{array} \right. $$
 Remove the row $\boldsymbol 1\tr$ and the last two columns and consider the reduced matrix:
$$
{\mathbf{A}}_1= \left(\begin{array}{rrrr}
1& 0& 1&  0\\
0& 1& 1& 1\\
\end{array}\right).
$$
The model  $RM({\mathbf{A}}_1)$ does not have the overall effect and can be specified by two generalized odds ratios:
$$\left\{\begin{array}{l}
p_2 /( p_1p_4) = 1,\\
p_2 /p_4 = 1.
 \end{array} \right. $$
These odds ratios are defined on the smaller probability space, and may be obtained by  removing $p_5$ and $p_6$, and the redundant odds ratio, from the odds ratio specification of the original model.
\qed
\end{example}

\vspace{5mm}

\begin{example}\label{removeOEexample2}
Consider the relational model $RM(\bar{\mathbf{A}}_1)$ generated by 
$$
\bar{\mathbf{A}}_1= \left(\begin{array}{rrrr}
1 & 1& 1&  1\\
1& 1& 1&   0\\
1& 0& 1&  0\\
\end{array}\right).
$$
In terms of the generalized odds ratios, the model specification is $p_1/p_3 = 1$. Notice that $\bar{\mathbf{A}}_1$ is row equivalent to $$\bar{\mathbf{A}}_2= \left(\begin{array}{rrrr}
0 & 0& 0&  1\\
1& 1& 1&   0\\
1& 0& 1&  0\\
\end{array}\right).
$$
Because every $\boldsymbol d$ in  $Ker(\mathbf{A})$ is orthogonal to $(0, 0, 0, 1)$,  its last component has to be zero: $d_4 = 0$. Therefore, in any specification of $RM(\bar{\mathbf{A}}_1)$ in terms of the generalized odds ratios, $p_4$ will not be present. 
Set
$$
{\mathbf{A}}_1= \left(\begin{array}{rrr}
1& 1& 1\\
1& 0& 1\\
\end{array}\right).
$$
The model  $RM({\mathbf{A}}_1)$ has the overall effect and can be specified by exactly the same generalized odds ratio as the model $RM(\bar{\mathbf{A}}_1)$: $p_1/p_3 = 1$.

As a further illustration, take
$$
\bar{\mathbf{A}}_1= \left(\begin{array}{rrrrrr}
1 & 1& 1&  1&1&1\\
1& 1& 1&   0& 0& 0\\
1& 0& 1&  0 & 0& 0\\
\end{array}\right).
$$
In this case, 
$$
\bar{\mathbf{A}}_2= \left(\begin{array}{rrrrrr}
0 & 0& 0&  1&1&1\\
1& 1& 1&   0& 0& 0\\
1& 0& 1&  0 & 0& 0\\
\end{array}\right),
$$
and $\mathbf{A}_1$ is the same as above. In this case, $RM(\bar{\mathbf{A}}_1)$ is specified by
$p_1/p_3 = 1, p_4/p_5 = 1, p_4/p_6 = 1$, and $RM(\bar{\mathbf{A}}_1)$ is described as previously: $p_1/p_3 = 1$.
\qed
\end{example}

%When $I_0 = 1$, the statement  \textit{(ii)} of Theorem \ref{thConjecture}  has a simple interpretation in terms of algebraic geometry.  

The polynomial variety $\mathcal{X}_{\bar{\mathbf{A}}_1}$ defining the model $RM(\bar{\mathbf{A}}_1)$ is homogeneous. If the removal of the cells comprising $\mathcal{I}_0$ leads to a model without the overall effect, the variety  $\mathcal{X}_{\bar{\mathbf{A}}_1}$ is dehomogenized, yielding the affine variety $\mathcal{X}_{{\mathbf{A}}_1}$ \citep[cf.][]{Cox}. 

The converse to this procedure, homogenization of an affine variety, is also studied in algebraic geometry, and is performed by introducing a new variable in such a way that all equations defining the variety become homogeneous. The essence of this procedure is that all probabilities are multiplied by this new variable. This leaves the homogeneous odds ratios unchanged, as the new variable cancels out. The value of a non homogeneous odds ratio becomes, instead of $1$, the reciprocal of the new variable. For example, the odds ratio in Example \ref{removeOEexample1}
$p_2/(p_1p_4)=1$ becomes  $vp_2/(vp_1vp_4)=1/v$, where $v$ is the new variable. If now $v$ could be seen as the probability of an additional cell, say $p_0$, then this would be a homogeneous odds ratio, $p_0p_2/(p_1p_4)=1$.

Although a straightforward procedure in algebraic geometry, it does not necessarily have a clear interpretation in statistical inference. Introducing a new variable and a new cell for the purpose of homogenization can be made meaningful in some situations, if the sample space may be extended by one cell, and the new variable is the parameter (probability) of this cell. Homogenization requires this new variable to appear in every cell, too, so the parameter may be seen as the overall effect. The new cell has only the overall effect, thus no feature is present in this cell.

The augmentation of the sample space by an additional cell does make sense, if that cell exists in the population but was not observed because of the design of the data collection procedure, as in Example \ref{crabs}. The additional cell has the overall effect only, thus is a ``no feature present'' cell.

\begin{example}\label{crabs} 
In the study of swimming crabs by \cite*{Kawamura1995}, three types of baits were used in traps to catch crabs: fish alone, sugarcane alone, fish-sugarcane combination. The sample space consists of three cells, $\mathcal{I} = \{(0, 1), (1,0), (1,1)\}$, and the cell $(0,0)$ is absent by design, because there were no traps without any bait. Under the AS independence, the cell parameter associated with both bait types present is the product of the parameters associated with the other two cells. This is a relational model without the overall effect, generated by the matrix \citep[cf.][]{KRipf1}:
$$\mathbf{A} = \left(
\begin{array}{ccc}
1 & 0 & 1 \\
0 & 1 & 1 \\
\end{array}
\right).$$
In fact, the overall effect cannot be included in this situation, because it would saturate the model. The affine variety associated with this model can be homogenized by including a new variable. The new variable may only be interpreted as the parameter associated with no bait present, and calls for an additional cell in the sample space (to avoid model the saturation of the model) which  may only be interpreted as setting up a trap without any bait, which would also be a plausible research design.  The resulting model is generated by $\mathbf{A}_0$:
$$\mathbf{A}_0 = \left(
\begin{array}{cccc}
1& 1& 1& 1\\
0& 1 & 0 & 1 \\
0& 0 & 1 & 1 \\
\end{array}
\right),$$
and indeed, is the model of traditional independence on the complete $2 \times 2$ contingency table. \qed
\end{example}

%If a cell is absent by design, one may explain adding a new cell by including another treatment combination in the experiment. In the next example, see also Section \ref{LineagesDiff}, no realistic interpretation within the original context can be offered. 

For situations like in Example \ref{crabs},  the AS independence is a natural model, but it also applies to cases, when the ``no feature present'' situation is logically impossible (like market basket analysis, or records of traffic violations, see \cite{KRD11, KRipf1}, and also the biological example in Section \ref{hemato}), and in such cases, the cell augmentation procedure is not meaningful. There are, however situations, when the existence of the ``no feature present'' cell is logically not impossible, but the actual existence in the population is dubious.

For a more general discussion of the homogenization of AS independence,
let $\boldsymbol d_1, \dots, \boldsymbol d_K$ be a kernel basis of $\mathbf{A}$, satisfying (\ref{KernelRows}) with $\boldsymbol d_1\tr \boldsymbol 1 = -1$. The polynomial ideal $\mathscr{I}_{\mathbf{A}}$ associated with the matrix $\mathbf{A}$ is generated by one non-homogeneous polynomial $\boldsymbol p^{\boldsymbol d_1^+} - \boldsymbol p^{\boldsymbol d_1^-}$, and $K-1$ homogeneous polynomials, $\boldsymbol p^{\boldsymbol d_k^+} - \boldsymbol p^{\boldsymbol d_k^-}$. Notice that, because $(\boldsymbol d_1^+)\tr \boldsymbol 1 - (\boldsymbol d_1^-)\tr \boldsymbol 1$ $=  (\boldsymbol d_1^+ - \boldsymbol d_1^-)\tr \boldsymbol 1$ = $\boldsymbol d_1\tr \boldsymbol 1$, the difference in the degrees of the monomials $p^{\boldsymbol d_1^+}$ and $p^{\boldsymbol d_1^-}$ is $-1$. Therefore, the polynomial  $\boldsymbol p^{\boldsymbol d_1^+} - \boldsymbol p^{\boldsymbol d_1^-}$ can be homogenized by multiplying the first monomial by one additional variable, say $p_0$:
$$p_0\boldsymbol p^{\boldsymbol d_1^+} - \boldsymbol p^{\boldsymbol d_1^-}.$$ 
The polynomial ideal generated by
$$p_0\boldsymbol p^{\boldsymbol d_1^+} - \boldsymbol p^{\boldsymbol d_1^-}, \boldsymbol p^{\boldsymbol d_2^+} - \boldsymbol p^{\boldsymbol d_2^-}, \dots, \boldsymbol p^{\boldsymbol d_K^+} - \boldsymbol p^{\boldsymbol d_K^-}$$
and the corresponding variety are homogeneous, and can be described by the matrix of size $(J+1)\times (I+1)$ of the following structure:
$${\mathbf{A}}_0 = \left(\begin{array}{cc} 1 & \boldsymbol 1\tr_I \\
\boldsymbol 0_J & \mathbf{A} \end{array} \right).$$
Here, $\boldsymbol 1\tr_I$ is the row of $1$'s of length $I$, and $\boldsymbol 0_J$ is the column of zeros of length $J$. 

In fact, the homogeneous variety $\mathcal{X}_{{\mathbf{A}}_0}$ is the projective closure of the affine variety $\mathcal{X}_{\bar{\mathbf{A}}_0}$ \citep{Cox}. The latter can be obtained from the former by dehomogenization via setting $p_0 = 1$.

The homogenization of the model of AS independence for three features is discussed next.

\begin{example}\label{Example0}
Consider the model of AS independence for three attributes, $A$, $B$, and $C$, described in \cite{KRD11}. 
\begin{equation}\label{indepAtr}
{p_{110}}={p_{100}p_{010}}, \,\,{ p_{101}}={p_{100}p_{001}}, \,\, {p_{011}}={p_{010}p_{001}}, \,\,{p_{111}}={p_{100}p_{010}p_{001}}.
\end{equation}
Here $p_{ijk} = \mathbb{P}(A = i, B=j, C=k)$ for $i, j, k \in \{0,1\}$, where the combination $(0,0,0)$ does not exist, and $\sum_{ijk} p_{ijk} = 1$. The equations (\ref{indepAtr}) specify  the relational model generated by
$$
\mathbf{A}= \left(\begin{array}{rrrrrrr}
1& 0& 0& 1& 1& 0& 1\\
0& 1& 0& 1& 0& 1& 1\\
0& 0& 1& 0& 1& 1& 1\\
\end{array}\right).
$$ 
Consider the following kernel basis matrix which is of the form (\ref{KernelRows}):
$$\mathbf{D} = \left(\begin{array} {rrrrrrr}
-1&-1&0&1&0&0&0\\
-1&0&1&1&0&-1&0 \\
0&-1&1&1&-1&0&0\\ 
0&0&-1&0&1&1&-1 \\
\end{array}\right).$$
The corresponding polynomial ideal is:
$$\mathscr{I}_{\mathbf{A}} =\langle \,\, p_{110} - p_{100} p_{010}, \,\,p_{110}p_{001} - p_{011}p_{100}, \,\, p_{110}p_{001} - p_{101}p_{010}, \,\, p_{111}p_{001} - p_{101}p_{011} \,\, \rangle.$$
The generating set of $\mathscr{I}_{\mathbf{A}}$ includes at least one non-homogeneous polynomial, due to $\boldsymbol d_1$, and  can be homogenized by introducing a new variable, say $p_{000}$.   The resulting ideal,
$$\mathscr{I}_{\mathbf{A}_0} = \langle \,\, p_{000}p_{110} - p_{100} p_{010}, \,\,p_{110}p_{001} - p_{011}p_{100}, \,\, p_{110}p_{001} - p_{101}p_{010}, \,\, p_{111}p_{001} - p_{101}p_{011} \,\, \rangle,$$
is homogeneous, and its zero set
\begin{equation}\label{variety0}
\mathcal{X}_{\mathbf A_0} = \left\{\boldsymbol p \in \mathbb{R}^{|\mathcal{I}+1|}_{\geq 0}: \,\, \boldsymbol p^{\boldsymbol d^+} = \boldsymbol p^{\boldsymbol d^-}, \,\,  \forall  \boldsymbol d \in Ker(\mathbf{A}_0) \right \},
\end{equation}
where 
$$
\mathbf{A}_0= \left(\begin{array}{rrrrrrrr}
1& 1& 1& 1& 1& 1& 1& 1\\
0& 1& 0& 0& 1& 1& 0& 1\\
0& 0& 1& 0& 1& 0& 1& 1\\
0& 0& 0& 1& 0& 1& 1& 1\\
\end{array}\right),
$$
is thus a homogeneous variety. The relational model $RM(\mathbf{A}_0)$ is defined on a larger sample space, namely $\mathcal{I}\cup(0,0,0)$.  The model has the overall effect and is  the following set of distributions: 
\begin{equation}\label{ExtMprob0}
\boldsymbol p \in \mathcal{X}_{\mathbf A_0} \cap int(\Delta_{I}).
\end{equation} 
The rows of $\mathbf{A}_0$ are the indicators of the cylinder sets of the total (the row of $1$'s), and of the $A$, $B$, and $C$ marginals. Therefore, the relational model $RM(\mathbf{A}_0)$ is the traditional model of mutual independence. 
\qed
\end{example}

The next theorem states in general what was seen in the example. Let $X_1, \dots, X_T$ be the random variables taking values in $\{0,1\}$. Write $\mathcal{I}^0$ for the Cartesian product of their ranges,  and let $\mathcal{I} = \mathcal{I}^0 \setminus (0,\dots,0) $.

\begin{theorem}
Let $RM(\mathbf{A})$ be the model of AS independence of $X_1, \dots, X_T$ on the sample space $\mathcal{I}$. The interior of the projective closure of this model is the log-linear model  of mutual independence of  $X_1, \dots, X_T$ on the sample space $\mathcal{I}^0$. 
\end{theorem}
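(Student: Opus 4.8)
The plan is to carry out, for general $T$, the homogenization that Example~\ref{Example0} performs for $T=3$, and then to recognize the homogenized model by reading the rows of its model matrix. First I would record the model matrix of AS independence. Ordering the cells of $\mathcal{I}$ by the nonzero vectors $\boldsymbol x=(x_1,\dots,x_T)\in\{0,1\}^T$ and writing $\boldsymbol e_t$ for the cell in which only $X_t$ is present, the defining equations $p_{\boldsymbol x}=\prod_{t:\,x_t=1}p_{\boldsymbol e_t}$ are equivalent to $\log p_{\boldsymbol x}=\sum_{t=1}^T x_t\theta_t$, so $RM(\mathbf A)$ is generated by the $T\times I$ matrix $\mathbf A=(a_{t\boldsymbol x})$ with $a_{t\boldsymbol x}=x_t$; equivalently, the columns of $\mathbf A$ are the nonzero elements of $\{0,1\}^T$. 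I would then check that $\boldsymbol 1\tr$ is not in the row space of $\mathbf A$ whenever $T\ge 2$: if $\sum_t c_tx_t=1$ for every nonzero $\boldsymbol x$, then $\boldsymbol x=\boldsymbol e_t$ forces $c_t=1$ for all $t$, and $\boldsymbol x=\boldsymbol e_1+\boldsymbol e_2$ then gives the contradiction $2=1$. Hence $RM(\mathbf A)$ is a model without the overall effect, and the homogenization construction preceding Example~\ref{Example0} applies, using a kernel basis of the form~(\ref{KernelRows}).

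That construction introduces a new variable, here interpreted as $p_{(0,\dots,0)}$, and produces the relational model $RM(\mathbf A_0)$ on the augmented sample space $\mathcal{I}^0=\mathcal{I}\cup\{(0,\dots,0)\}$, where
$$\mathbf A_0=\left(\begin{array}{cc}1&\boldsymbol 1\tr_I\\ \boldsymbol 0_T&\mathbf A\end{array}\right),$$
and $\mathcal{X}_{\mathbf A_0}$ is the projective closure of $\mathcal{X}_{\mathbf A}$, hence the projective closure of the AS model $RM(\mathbf A)$ (which is Zariski dense in $\mathcal{X}_{\mathbf A}$). Since $\boldsymbol 1\tr$ is a row of $\mathbf A_0$, the model $RM(\mathbf A_0)$ has the overall effect, is a regular exponential family, and is precisely the set of strictly positive distributions lying on $\mathcal{X}_{\mathbf A_0}$ — that is, the interior of the projective closure. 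It then remains to identify $RM(\mathbf A_0)$. Its model matrix has as rows the indicator of $\mathcal{I}^0$ (the overall effect) and, for $t=1,\dots,T$, the row $(0,a_{t1},\dots,a_{tI})$, which over $\mathcal{I}^0$ is exactly the indicator of the cylinder set $\{\boldsymbol x:x_t=1\}$: it equals $0$ at the new cell and at every cell with $x_t=0$, and $1$ otherwise. Hence $\boldsymbol p\in RM(\mathbf A_0)$ iff $\log p_{\boldsymbol x}=\theta_0+\sum_t x_t\theta_t$, equivalently $p_{\boldsymbol x}=\prod_{t=1}^T f_t(x_t)$ for suitable positive functions $f_t$ on $\{0,1\}$, which is the factorization characterizing mutual independence of $X_1,\dots,X_T$; conversely every strictly positive mutually independent distribution on $\mathcal{I}^0$ has this form. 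Therefore $RM(\mathbf A_0)$ is the log-linear model of mutual independence on $\mathcal{I}^0$, as claimed.

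The step that really deserves care is the assertion that $\mathcal{X}_{\mathbf A_0}$ is the projective closure of $\mathcal{X}_{\mathbf A}$, since homogenizing a generating set of an ideal need not yield the ideal of the projective closure. I would argue this directly on the strictly positive locus, which is all the theorem concerns: each binomial $\boldsymbol p^{\boldsymbol e^+}-\boldsymbol p^{\boldsymbol e^-}$ with $\boldsymbol e\in Ker(\mathbf A_0)$ is homogeneous (because $\boldsymbol e\tr\boldsymbol 1=0$, the first row of $\mathbf A_0$ being $\boldsymbol 1\tr$) and, on setting $p_{(0,\dots,0)}=1$, becomes the $\mathbf A$-binomial of $(e_1,\dots,e_I)\in Ker(\mathbf A)$, which vanishes on $\mathcal{X}_{\mathbf A}$; as $\mathcal{X}_{\mathbf A}$ is dense in its projective closure, the homogeneous binomial vanishes on the whole closure, so a point of the closure with all coordinates positive satisfies $\boldsymbol p^{\boldsymbol e^+}=\boldsymbol p^{\boldsymbol e^-}$ for all $\boldsymbol e\in Ker(\mathbf A_0)$, i.e. $\log\boldsymbol p$ lies in the row space of $\mathbf A_0$ — the defining property of $RM(\mathbf A_0)$; the reverse inclusion holds because any such $\boldsymbol p$, rescaled to $p_{(0,\dots,0)}=1$, lies in $\mathcal{X}_{\mathbf A}$. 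Alternatively, one may invoke the toric identification recorded before Example~\ref{Example0}, noting that dehomogenizing $\mathbf A_0$ at $p_{(0,\dots,0)}$ recovers $\mathbf A$.
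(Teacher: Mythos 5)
Your proof is correct and follows the same overall architecture as the paper's: write down the model matrix $\mathbf{A}$ of AS independence, pass to the augmented matrix $\mathbf{A}_0$ on $\mathcal{I}^0$, identify $\mathcal{X}_{\mathbf{A}_0}$ with the projective closure of $\mathcal{X}_{\mathbf{A}}$, and then read off mutual independence from the rows of $\mathbf{A}_0$. Where you genuinely diverge is in how the closure is identified. The paper's proof takes a Gr\"obner basis of $\mathscr{I}_{\mathbf{A}}$, homogenizes it (citing the standard result that homogenizing a Gr\"obner basis yields the projective closure), and then invokes Theorem \ref{thConjecture1}(ii) to conclude that the resulting variety is described by $\mathbf{A}_0$; the identification of the positive points with $RM(\mathbf{A}_0)$ is then asserted via the multiplicative structure. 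You instead bypass Gr\"obner bases entirely and prove both inclusions directly on the strictly positive locus: every $\mathbf{A}_0$-binomial is homogeneous, dehomogenizes to an $\mathbf{A}$-binomial, hence vanishes on the closure by density, so positive points of the closure have $\log\boldsymbol p$ in the row space of $\mathbf{A}_0$; conversely each point of $RM(\mathbf{A}_0)$, rescaled to $p_{(0,\dots,0)}=1$, lies in the affine chart of $\mathcal{X}_{\mathbf{A}}$. This is a more elementary and self-contained route, and it explicitly confronts the pitfall (homogenizing an arbitrary generating set need not give the ideal of the closure) that the paper handles implicitly by stipulating a Gr\"obner basis; the trade-off is that your argument only controls the positive part of the closure, which is all the theorem requires, whereas the paper's route describes the full projective variety. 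Your explicit verification that $\boldsymbol 1\tr$ is not in the row space of $\mathbf{A}$ for $T\geq 2$, and your factorization argument identifying the row space of $\mathbf{A}_0$ with mutual independence, supply details the paper leaves implicit.
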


\begin{proof}
Let $\mathbf{A}$ be the model matrix for the AS independence:
$$
\mathbf{A}= \left(\begin{array}{rrrrrrrrr}
1& 0& 0&\dots & 0& 1& 1& \dots &1\\
0& 1&  0&\dots& 0& 1& 0& \dots&1\\
0& 0&  1&\dots& 0& 0& 1& \dots &1\\
 \vdots& \vdots & \vdots  &\ddots & \vdots & \vdots& \vdots & \ddots & \vdots \\
0& 0& 0&\dots&1& 0& 0& \dots &1\\
\end{array}\right).
$$
The number of columns of $\mathbf{A}$ is equal to the number of cells in the sample space $\mathcal{I}$, $I = 2^T - 1$. 
The model $RM(\mathbf{A})$ is  the intersection of the polynomial variety $\mathcal{X}_{\mathbf{A}}$ and the interior of the simplex $\Delta_{I - 1}$. The variety $\mathcal{X}_{\mathbf{A}}$ is non-homogeneous, because among its generators there is at least one non-homogeneous polynomial. In order to obtain the projective closure of $\mathcal{X}_{\mathbf{A}}$ \citep[cf.][]{Cox}, include the ``no feature present'' cell, indexed by $0$, to the sample space, choose a Gr{\"o}bner basis of the ideal $\mathscr{I}_{\mathbf{A}}$, and homogenize all non-homogeneous polynomials in this basis using the cell probability $p_{0}$. Because the projective closure of $\mathcal{X}_{\mathbf{A}}$ is the minimal homogeneous variety in the projective space whose dehomogenization is $\mathcal{X}_{\mathbf{A}}$ \citep{Cox}, Theorem \ref{thConjecture1}(ii) implies that this closure  can be described using the matrix  
$$\mathbf{A}_0= \left(\begin{array}{rrrrrrrrrrr}
1&1 & 1& 1& \dots & 1& 1& 1& \dots& 1\\
0&1& 0& 0&\dots & 0& 1& 1& \dots &1\\
0&0& 1&  0&\dots& 0& 1& 0& \dots&1\\
0& 0& 0&  1&\dots& 0& 0& 1& \dots &1\\
\vdots& \vdots& \vdots & \vdots  &\ddots & \vdots & \vdots& \vdots & \ddots & \vdots \\
0& 0& 0& 0&\dots&1& 0& 0& \dots &1\\
\end{array}\right).$$ 
Each distribution in $RM(\mathbf{A})$ has the multiplicative structure prescribed by $\mathbf{A}$ \citep{KRextended}, and during the homogenization, is mapped in a positive distribution in $\mathcal{X}_{\mathbf{A}_0}$. Because all strictly positive distributions in $\mathcal{X}_{\mathbf{A}_0}$ have the multiplicative structure prescribed by $\mathbf{A}_0$, they comprise the relational model  $RM(\mathbf{A}_0)$. This matrix describes the model of mutual independence between $X_1, \dots, X_T$ in the effect coding, and the proof is complete. 
\end{proof}

The homogenization (in the language of algebraic geometry) or regularization (in the language of the exponential families) leads to a simpler structure, which allows a simpler calculation of the MLE. However, the additional cell was not observed in these cases, and assuming its frequency is zero is ungrounded and may lead to wrong inference. %as illustrated in Section \ref{hemato}.

The framework developed here may also be used to define context specific independence, so that in one context conditional independence holds, in another one, AS independence does. To illustrate, let $X_1$, $X_2$, $X_3$  be random variables taking values in $\{0,1\}$. Assume that the $(0,0,0)$ outcome is impossible, so the sample space can be expressed as:
$$
\begin{tabular}{ccccc}
   & \multicolumn{2}{c}{$X_3=0$}&\multicolumn{2}{c}{$X_3=1$}\\
\cmidrule(lr){2-3}  \cmidrule(lr){4-5} 
   & $X_2=0$ & $X_2=1$ & $X_2=0$ & $X_2=1$ \\ [3pt]
\hline \\ %\raisebox{8pt}
$X_1=0$ &- & $p_{010}$ & $p_{001}$ & $p_{011}$\\
$X_1=1$ & $p_{100}$  & $p_{110}$ & $p_{101}$ & $p_{111}$\\
\end{tabular} 
$$
Let $\boldsymbol p = (p_{001}, p_{010}, p_{011}, p_{100}, p_{101}, p_{110}, p_{111})$, and consider the relational model without the overall effect generated by 
\begin{equation}\label{modMxCIAS}
\mathbf{A}_0 = \left(\begin{array}{ccccccc}
0&0&0&1&1&1&1\\
0&1&1&0&0&1&1\\
1&0&1&0&1&0&1\\
0&0&0&0&1&0&1\\
0&0&1&0&0&0&1\\
\end{array} \right),
\end{equation}
The kernel basis matrix is equal to:
\begin{equation}\label{kernelCIconventional}
\mathbf{D}_0 = \left(\begin{array}{rrrrrrr}
0& -1& 0& -1& 0& 1&0 \\
1&0&-1&0&-1&0&1\\
\end{array} \right),
\end{equation}
and thus, the model can be specified in terms of the following two generalized odds ratios:
$$\mathcal{COR}(X_1X_2 \mid X_3 = 0) = \frac{p_{110}}{p_{010}p_{100}} = 1, \qquad  \mathcal{COR}(X_1X_2 \mid X_3 = 1) = \frac{p_{001}p_{111}}{p_{011}p_{101}} = 1.$$
The second constraint expresses the (conventional) context-specific independence of $X_1$ and $X_2$ given $X_3 = 1$. The first odds ratio is non-homogeneous, and the corresponding constraint may be seen as the context-specific AS-independence of of $X_1$ and $X_2$ given $X_3 = 0$.

\section{ML estimation with and without the overall effect}\label{MLEsection}

The properties of the ML estimates under relational models, discussed in detail in  \cite{KRD11} and \cite{KRextended},  are summarized here in the language of the linear and multiplicative families defined by the model matrix and its kernel basis matrix. The conditions of existence of the MLE are reviewed first.

Let $\boldsymbol a_1, \dots, \boldsymbol a_{|\mathcal{I}|}$ denote the columns of $\mathbf{A}$, and let $C_{\mathbf{A}} = \{ \boldsymbol t \in \mathbb{R}^J_{\geq 0}: \,\, \exists \boldsymbol p \in \mathbb{R}^{|\mathcal{I}|}_{\geq 0}  \quad \boldsymbol t = \mathbf{A}\boldsymbol p\}$  be the polyhedral cone whose relative interior comprises such $\boldsymbol t \in \mathbb{R}^J_{> 0}$, for which there exists a $\boldsymbol p > \boldsymbol 0$ that satisfies $\boldsymbol t = \mathbf{A}\boldsymbol p$. A set of indices $F = \{i_1, i_2, \dots,i_f\}$ is called  facial if the columns $\boldsymbol a_{i_1}, \boldsymbol a_{i_2}, \dots, \boldsymbol a_{i_f}$ are affinely independent and span a proper face of $C_{\mathbf{A}}$  \citep*[cf.][]{GrunbaumConvex, GeigerMeekSturm2006,FienbergRinaldo2012}.  It can be shown that a set $F$ is facial if and only if there exists a $\boldsymbol c \in \mathbb{R}^J$, such that $\boldsymbol c'\boldsymbol a_i = 0$ for every $i \in F$ and $\boldsymbol c'\boldsymbol a_i  > 0$ for every $i \notin F$.

Let $\boldsymbol q \in \mathcal{P}$ and let  $\cg K$ be the set of $\kappa > 0$, such that, for a fixed $\kappa$, the linear family
\begin{equation}\label{LinFam}
\cg F(\mathbf A, \bd q,\kappa) = \{\bd r \in \cg P: \,\, \mathbf{A} \bd r = \kappa\mathbf{A}\bd q\}
\end{equation}
is not empty, and let  $\cg F(\mathbf A, \bd q) = \bigcup_{\cg K} \cg F(\mathbf A, \bd q,\kappa)$. For each $\kappa > 0$, the linear family $\cg{F}(\mathbf{A}, \bd q, \kappa)$ is a polyhedron in the cone $C_{\mathbf{A}}$.

\begin{theorem}\label{MLEextendTHnew}\citep{KRextended}
Let  $RM(\mathbf{A})$ be a relational model, with or without the overall effect, and let $\boldsymbol q$ be the observed distribution.
\begin{enumerate}
\item  The MLE $\hat{p}_{\boldsymbol q}$ given $\boldsymbol q$ exists if only:
\begin{enumerate}[(i)]
\item $supp(\boldsymbol q) = \mathcal{I}$, or
\item $supp(\boldsymbol q) \subsetneq \mathcal{I}$ and, 
 for all facial sets $F$ of $\mathbf{A}$, $supp(\boldsymbol q) \not\subseteq F$. %Otherwise,
%\item Let $F$ be the smallest facial set such that $supp(\boldsymbol q) \subseteq F$.  Then the MLE $\hat{\boldsymbol \delta}_{\boldsymbol y, F}$ of ${\boldsymbol \delta}_F$ under  the model ${RM}_{\boldsymbol \delta_F}(\mathbf{A}_F)$ exists, and  $\tilde{\boldsymbol \delta}_{\boldsymbol y} = (\hat{\boldsymbol \delta}_{\boldsymbol y, F}, \boldsymbol 0_{\mathcal{I}\setminus F})$ is the MLE under the model $\,\xbar{RM}_{\boldsymbol \delta}(\mathbf{A})$.
\end{enumerate}
In either case, $\hat{\bd p}_q = \cg F(\mathbf A, \bd q)  \cap  int(\mathcal{X}_{\mathbf{A}})$,  and there exists a unique constant $\gamma_q > 0$, also depending on $\mathbf A$, such that:
$$ \mathbf{A}\hat{\bd p}_q = \gamma_q \mathbf{A} \bd q,  \quad \bd 1\tr \hat{\bd{p}}_q = 1.$$
\item The MLE under the extended model $\,\xbar{RM}(\mathbf{A})$, defined in (\ref{ExtMprob}), always exists and is the unique point of $\mathcal{X}_{\mathbf{A}}$ which satisfies:
\begin{align}
&\mathbf{A}\boldsymbol{p} = \gamma_q \mathbf{A } \boldsymbol q, \mbox{ for some } \gamma_q > 0; \label{A} \\
&\boldsymbol 1'\boldsymbol p = 1. \nonumber
\end{align}
% and let $F  \subsetneq \{1, \dots, |\mathcal{I}|\}$ denote a facial set of $\mathbf{A}$. 
\end{enumerate}
\end{theorem}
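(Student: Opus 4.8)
\medskip
\noindent The plan is to recast ML estimation as a \emph{convex} program in the logarithmic parametrization and to read all three assertions off its Karush--Kuhn--Tucker conditions and a recession-cone analysis. Put $V=\{\mathbf A\tr\boldsymbol\theta:\boldsymbol\theta\in\mathbb R^{J}\}\subseteq\mathbb R^{I}$, the row space of $\mathbf A$, whose orthogonal complement is $Ker(\mathbf A)$. In the coordinates $\boldsymbol x=\log\boldsymbol p$ the positive model $RM(\mathbf A)$ is $\{\boldsymbol x\in V:\boldsymbol 1\tr e^{\boldsymbol x}=1\}$, which, since $\boldsymbol x\mapsto\log(\boldsymbol 1\tr e^{\boldsymbol x})$ is convex, is the relative boundary of the closed convex set $\mathcal C=\{\boldsymbol x\in V:\boldsymbol 1\tr e^{\boldsymbol x}\le 1\}$. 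Up to an additive constant the log-likelihood given $\boldsymbol q$ is the \emph{linear} functional $\ell(\boldsymbol x)=\boldsymbol q\tr\boldsymbol x=(\mathbf A\boldsymbol q)\tr\boldsymbol\theta$, which is bounded above on $\mathcal C$ (indeed $\ell\le 0$ there). Since $\mathbf A$ has no zero column and $\boldsymbol q\neq\boldsymbol 0$, $\mathbf A\boldsymbol q\neq\boldsymbol 0$, i.e.\ $\boldsymbol q\notin V^{\perp}$; hence the supremum of $\ell$ over $\mathcal C$, \emph{when attained}, is never attained in the relative interior of $\mathcal C$, so it lies on $\{\boldsymbol 1\tr e^{\boldsymbol x}=1\}$, i.e.\ at a genuine member of $RM(\mathbf A)$. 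Maximizing the likelihood over $RM(\mathbf A)$ is thus the convex program $\max\{\boldsymbol q\tr\boldsymbol x:\boldsymbol x\in\mathcal C\}$.

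\smallskip
\noindent For the first claim I would show this program attains its maximum iff $supp(\boldsymbol q)$ is contained in no proper facial set. If $supp(\boldsymbol q)\subseteq F$ with $F$ proper facial, take the witness $\boldsymbol c$ with $\boldsymbol c\tr\boldsymbol a_{i}=0$ on $F$ and $\boldsymbol c\tr\boldsymbol a_{i}>0$ off $F$, and suppose a positive maximizer $\boldsymbol x^{\ast}$ existed: moving along $\boldsymbol x^{\ast}-t\mathbf A\tr\boldsymbol c\in V$ leaves $\ell$ unchanged (as $\boldsymbol q$ is supported on $F$) while, for $t>0$, strictly decreasing $\boldsymbol 1\tr e^{\boldsymbol x}$, producing a point in the relative interior of $\mathcal C$ with the maximal objective value; but $\ell$ can be strictly increased from any interior point ($\boldsymbol q\notin V^{\perp}$), a contradiction. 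Conversely, if no positive maximizer exists, a maximizing sequence must escape along a recession direction $\boldsymbol u$ of $\mathcal C$; since $0^{+}\mathcal C=\{\boldsymbol u\in V:\boldsymbol u\le\boldsymbol 0\}$ and finiteness of the sup forces $\boldsymbol q\tr\boldsymbol u=0$, hence (with $\boldsymbol q\ge\boldsymbol 0$, $\boldsymbol u\le\boldsymbol 0$) $\boldsymbol u$ vanishes on $supp(\boldsymbol q)$; writing $\boldsymbol u=\mathbf A\tr\boldsymbol c$, the set $\{i:u_{i}=0\}\supseteq supp(\boldsymbol q)$ is a proper facial set with witness $-\boldsymbol c$. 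So the MLE exists iff (i) or (ii) holds. When it does, the KKT conditions at the maximizer $\boldsymbol p^{\ast}=e^{\boldsymbol x^{\ast}}$ give a multiplier $\lambda\ge 0$ with $-\boldsymbol q+\lambda\boldsymbol p^{\ast}\in V^{\perp}=Ker(\mathbf A)$, i.e.\ $\lambda\mathbf A\boldsymbol p^{\ast}=\mathbf A\boldsymbol q$; as $\mathbf A\boldsymbol q\neq\boldsymbol 0$ we get $\lambda>0$, so $\gamma_{q}:=\lambda^{-1}>0$, and complementary slackness gives $\boldsymbol 1\tr\boldsymbol p^{\ast}=1$. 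Hence $\boldsymbol p^{\ast}\in\cg F(\mathbf A,\boldsymbol q)\cap int(\mathcal X_{\mathbf A})$; conversely any point of $\cg F(\mathbf A,\boldsymbol q)\cap int(\mathcal X_{\mathbf A})$ is a KKT point of the convex program and hence equals $\boldsymbol p^{\ast}$, which proves $\hat{\boldsymbol p}_{q}=\cg F(\mathbf A,\boldsymbol q)\cap int(\mathcal X_{\mathbf A})$ and the uniqueness of $\gamma_{q}$.

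\smallskip
\noindent For the second claim, $\xbar{RM}(\mathbf A)=\mathcal X_{\mathbf A}\cap\Delta_{I-1}$ is compact and $\boldsymbol p\mapsto\sum_{i}q_{i}\log p_{i}$ (with $0\log 0:=0$) is upper semicontinuous on it, so a maximizer $\boldsymbol p^{\ast}$ exists; its support $S$ contains $supp(\boldsymbol q)$, else the likelihood would be $-\infty$. Restricting to the coordinates of $S$, $\boldsymbol p^{\ast}_{S}$ lies in $RM(\mathbf A_{S})$ (the binomial relations of $\mathcal X_{\mathbf A}$ with support inside $S$ are exactly those of $Ker(\mathbf A_{S})$) and is the ML estimate there for the data $\boldsymbol q_{S}$; by the first claim applied to $\mathbf A_{S}$, $\mathbf A_{S}\boldsymbol p^{\ast}_{S}=\gamma_{q}\mathbf A_{S}\boldsymbol q_{S}$ and $\boldsymbol 1\tr\boldsymbol p^{\ast}_{S}=1$, and since $\mathbf A\boldsymbol p^{\ast}=\mathbf A_{S}\boldsymbol p^{\ast}_{S}$ and $\mathbf A\boldsymbol q=\mathbf A_{S}\boldsymbol q_{S}$ (because $supp(\boldsymbol q)\subseteq S$) this is exactly~(\ref{A}). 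Uniqueness follows because the support of \emph{every} maximizer is forced to be the same set, the minimal facial set of $\mathbf A$ containing $supp(\boldsymbol q)$, whereupon the common restriction solves the strictly convex program of the first claim; strict convexity here is that of $\boldsymbol x\mapsto\boldsymbol 1\tr e^{\boldsymbol x}$, which sends the midpoint of two distinct maximizers into the interior of $\mathcal C$, contradicting the first paragraph.

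\smallskip
\noindent The step I expect to be the main obstacle is the boundary bookkeeping used in the last two claims: showing that the support of \emph{every} maximizer over $\xbar{RM}(\mathbf A)$ equals the minimal facial set of $\mathbf A$ containing $supp(\boldsymbol q)$ --- that is, reconciling the algebraic boundary of the toric variety $\mathcal X_{\mathbf A}$ (which zero patterns actually occur on $\mathcal X_{\mathbf A}$) with the polyhedral face structure of the cone $C_{\mathbf A}$ --- since this is what legitimizes the reduction to the support (extending a distribution in $RM(\mathbf A_{S})$ by zeros need not land in $\xbar{RM}(\mathbf A)$ unless $S$ is facial) and, with it, the uniqueness statement. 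A minor additional point is that $\mathbf A_{S}$ need not have full row rank, so ``$RM(\mathbf A_{S})$'' should be read after discarding redundant rows, which changes neither $Ker(\mathbf A_{S})$ nor the facial structure, so the inductive use of the first claim remains legitimate.
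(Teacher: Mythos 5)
First, a contextual point: the paper does not actually prove this theorem. Immediately after the statement it says the claims ``follow from Theorem 4.1 in \cite{KRextended} and Corollary 4.2 in \cite{KRD11}, and the proof is thus omitted.'' So there is no in-paper argument to compare against, and your proposal must stand as a self-contained proof of the quoted result. Your treatment of part 1 does stand: the reformulation as the linear program $\max\{\boldsymbol q\tr\boldsymbol x:\ \boldsymbol x\in V,\ \boldsymbol 1\tr e^{\boldsymbol x}\le 1\}$, the recession-cone dichotomy for existence (using the paper's witness characterization of facial sets), and the KKT identity $\mathbf{A}\boldsymbol q=\lambda\mathbf{A}\boldsymbol p^{\ast}$ with $\lambda>0$ forcing $\boldsymbol 1\tr\boldsymbol p^{\ast}=1$ are all correct, and strict convexity of $\boldsymbol x\mapsto\boldsymbol 1\tr e^{\boldsymbol x}$ on $V$ gives uniqueness of both the maximizer and $\gamma_q$. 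This is, in convex-analytic dress, essentially the argument of the cited sources.

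Part 2, however, contains a genuine gap --- precisely the one you flag at the end, and it cannot be waved through because it is the substantive content of the theorem being quoted. Your reduction to part 1 rests on two unproved facts: (a) that the support $S$ of a maximizer over $\xbar{RM}(\mathbf{A})=\mathcal{X}_{\mathbf A}\cap\Delta_{I-1}$ is a facial set of $\mathbf{A}$, equivalently that the zero patterns realized on the toric variety $\mathcal{X}_{\mathbf A}$ are exactly the facial sets of $C_{\mathbf{A}}$, so that the positive part of the $S$-stratum really is $RM(\mathbf{A}_S)$; and (b) that $S$ is the \emph{minimal} facial set containing $supp(\boldsymbol q)$, which is what licenses applying part 1 to $(\mathbf{A}_S,\boldsymbol q_S)$ (existence of the positive MLE in the reduced model) and what your uniqueness argument leans on. Fact (a) is a Geiger--Meek--Sturmfels-type description of the boundary of a toric variety and (b) is the core of Theorem 4.1 in \cite{KRextended}; neither is routine, especially in the non-homogeneous ($\boldsymbol 1\tr\notin\mathrm{rowspace}(\mathbf{A})$) case this paper is concerned with. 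In addition, the theorem asserts that the extended MLE is the \emph{unique point of} $\mathcal{X}_{\mathbf A}$ satisfying $\mathbf{A}\boldsymbol p=\gamma_q\mathbf{A}\boldsymbol q$ and $\boldsymbol 1\tr\boldsymbol p=1$, not merely the unique likelihood maximizer; excluding a second solution supported on a different facial stratum requires injectivity of $\boldsymbol p\mapsto\mathbf{A}\boldsymbol p$ on all of $\mathcal{X}_{\mathbf A}\cap\Delta_{I-1}$ (a moment-map/Birch-type theorem extended to the closure), which your argument does not supply. So part 2 as written is a correct reduction scheme, but one that presupposes the very results the theorem is imported from.
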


The statements follow from Theorem 4.1 in \cite{KRextended} and Corollary 4.2 in \cite{KRD11}, and the proof is thus omitted.  The constant $\gamma_q$, called  the adjustment factor, is the ratio between the subset sums of the MLE, $\mathbf{A}\hat{\bd p}_q$, and the subset sums of the observed distribution, $\mathbf{A}\bd q$. If the overall effect is present in the model, $\gamma_q = 1$ for all $\boldsymbol q$.

Let $\mathbf{A}$ be a model matrix whose row space does not contain $\boldsymbol 1\tr$, and let $\bar{\mathbf{A}}$ be the matrix obtained by augmenting $\mathbf{A}$ with the row $\boldsymbol 1\tr$. It will be shown in the proof of the next theorem that every facial set of $\mathbf{A}$ is facial for $\bar{\mathbf{A}}$. If the observed $\boldsymbol q$ is positive, the MLEs  $\hat{\bd p}_{q}$ and  $\bar{\bd p}_q$ under the models $RM({\mathbf A})$ and $RM(\bar{\mathbf A})$, respectively, exist. However, as implied by the relationship between  the facial sets of ${\mathbf A}$ and $\bar{\mathbf{A}}$, if $\boldsymbol q$ has some zeros,  the MLE may exist under $RM({\mathbf A})$, but not under $RM(\bar{\mathbf A})$, or neither of the MLEs exist.

\begin{theorem}
Let $\mathbf{A}$ be a model matrix whose row space does not contain $\boldsymbol 1\tr$, and let $\bar{\mathbf{A}}$ be the matrix obtained by augmenting $\mathbf{A}$ with the row $\boldsymbol 1\tr$. Let $\boldsymbol q$ be the observed distribution. If, given $\boldsymbol q$, the MLE under $RM(\bar{\mathbf A})$ exists, so does the MLE under $RM({\mathbf A})$.
\end{theorem}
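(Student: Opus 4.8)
The plan is to push everything through the existence criterion of Theorem~\ref{MLEextendTHnew} and argue by contraposition: I would show that if the MLE under $RM(\mathbf{A})$ fails to exist, then the MLE under $RM(\bar{\mathbf{A}})$ fails to exist as well. By Theorem~\ref{MLEextendTHnew}(1), non-existence under $RM(\mathbf{A})$ means that $supp(\boldsymbol q)\subsetneq\mathcal I$ \emph{and} there is a facial set $F$ of $\mathbf{A}$ with $supp(\boldsymbol q)\subseteq F$. So the whole theorem reduces to the claim announced in the paragraph preceding it: every facial set of $\mathbf{A}$ is a facial set of $\bar{\mathbf{A}}$.

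For the claim I would use the separating-vector characterization of facial sets recalled above. Write $\boldsymbol a_1,\dots,\boldsymbol a_{|\mathcal I|}$ for the columns of $\mathbf{A}$ and $\bar{\boldsymbol a}_i=(1,\boldsymbol a_i\tr)\tr$ for the columns of $\bar{\mathbf{A}}$. Let $F$ be facial for $\mathbf{A}$ and pick $\boldsymbol c\in\mathbb R^{J}$ with $\boldsymbol c\tr\boldsymbol a_i=0$ for $i\in F$ and $\boldsymbol c\tr\boldsymbol a_i>0$ for $i\notin F$. Taking $\bar{\boldsymbol c}=(0,\boldsymbol c\tr)\tr\in\mathbb R^{J+1}$ gives $\bar{\boldsymbol c}\tr\bar{\boldsymbol a}_i=\boldsymbol c\tr\boldsymbol a_i$, which is $0$ exactly on the columns indexed by $F$ and strictly positive off $F$; hence $F$ is facial for $\bar{\mathbf{A}}$. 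If one prefers to work from the geometric definition instead of the characterization, one should additionally check that affine independence is preserved — from $\sum_j\lambda_j\bar{\boldsymbol a}_{i_j}=\boldsymbol 0$ one reads off $\sum_j\lambda_j=0$ and $\sum_j\lambda_j\boldsymbol a_{i_j}=\boldsymbol 0$, so affine independence of $\{\boldsymbol a_i:i\in F\}$ in fact upgrades to linear independence of $\{\bar{\boldsymbol a}_i:i\in F\}$ — and that $F\subsetneq\mathcal I$ (a facial set spans a proper face, so at least one column lies off $F$), which is what makes the supporting hyperplane $\bar{\boldsymbol c}$ cut out a proper face of $C_{\bar{\mathbf{A}}}$.

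With the claim established the theorem is immediate: the facial set $F$ of $\mathbf{A}$ containing $supp(\boldsymbol q)$ is also facial for $\bar{\mathbf{A}}$, and since $supp(\boldsymbol q)\subseteq F\subsetneq\mathcal I$, Theorem~\ref{MLEextendTHnew}(1) says the MLE under $RM(\bar{\mathbf{A}})$ does not exist — contradicting the hypothesis. I do not anticipate a genuine obstacle here; the only point requiring care is the bookkeeping between the two equivalent descriptions of facial sets and the verification that the appended coordinate should be $c_0=0$ (so that the constant row is ignored rather than exploited). It is worth remarking that the reverse inclusion is false — $\bar{\mathbf{A}}$ may possess facial sets that are not facial for $\mathbf{A}$ — which is precisely why the MLE can exist under $RM(\mathbf{A})$ while failing under $RM(\bar{\mathbf{A}})$, so the one-sided implication in the statement is essential and not an artifact of the argument.
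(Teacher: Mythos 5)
Your proposal is correct and follows essentially the same route as the paper: both arguments reduce the theorem to the claim that every facial set of $\mathbf{A}$ is facial for $\bar{\mathbf{A}}$, established by lifting the separating vector $\boldsymbol c$ to $\bar{\boldsymbol c}=(0,\boldsymbol c\tr)\tr$ so that $\bar{\boldsymbol c}\tr\bar{\boldsymbol a}_i=\boldsymbol c\tr\boldsymbol a_i$; the paper phrases this as a contradiction where you phrase it as a contraposition, which is the same argument. Your extra check that affine independence lifts to the augmented columns is a harmless refinement the paper leaves implicit.
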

\begin{proof}
If $\boldsymbol q > \boldsymbol 0$, both MLEs exists.

Assume that $\boldsymbol q$ has some zeros, that is, $supp(\boldsymbol q) \subsetneq \mathcal{I}$, and that the MLE under $RM(\bar{\mathbf{A}})$ exists. It will be shown next that for any facial set $F$ of $\mathbf{A}$, $supp(\boldsymbol q) \not\subseteq F$.

The proof is by contradiction. Let $F_0$ be a facial set of $\mathbf{A}$, such that $supp(\boldsymbol q) \subset F_0$.  Therefore, there exists a $\boldsymbol c \in \mathbb{R}^J$, such that $\boldsymbol c'\boldsymbol a_i = 0$ for every $i \in F$ and $\boldsymbol c'\boldsymbol a_i  > 0$ for every $i \notin F$. 

Denote by $\bar{\boldsymbol a}_1, \dots, \bar{\boldsymbol a}_I$ the columns of $\bar{\mathbf{A}}$. By construction,  $\bar{\boldsymbol a}_i = (1, \boldsymbol a_i\tr)\tr$, $\,i = 1, \dots, I$. Let $\bar{\boldsymbol c} = (0,\boldsymbol c\tr)\tr$. Then,  
$$\bar{\boldsymbol c}\tr\bar{\boldsymbol a}_i = 0\cdot 1+\boldsymbol c'\boldsymbol a_i = \left\{ \begin{array}{l} \boldsymbol c'\boldsymbol a_i = 0, \quad \mbox{ for } i \in F_0, \\
\bar{\boldsymbol c}\tr\bar{\boldsymbol a}_i  > 0, \quad \mbox{ for  } i \notin F_0, \end{array} \right.
$$
and thus, $F_0$ is a facial set of $\bar{\mathbf{A}}$. Because $supp(\boldsymbol q) \subset F_0$, the MLE under $RM(\bar{\mathbf{A}})$, given $\boldsymbol q$, does not exist, which contradicts the initial assumption. This completes the proof.
\end{proof}

\vspace{5mm}

\textbf{Example \ref{Example0}} (revisited):
Let $\boldsymbol q_1 = (0,0,0,0,0,0,1)'$ be the observed distribution. Because  $supp(\boldsymbol q_1) = \{7\}$ is not a subset of any facial sets of $\mathbf{A}$, the MLE exists:
$$\hat{\boldsymbol p}_{q_1} = \left(\sqrt[3]{2} - 1, \sqrt[3]{2} - 1, \sqrt[3]{2} - 1, (\sqrt[3]{2} - 1)^2, (\sqrt[3]{2} - 1)^2, (\sqrt[3]{2} - 1)^2, (\sqrt[3]{2} - 1)^3\right)',$$ 
with $\hat{\gamma}_q = 2 - \sqrt[3]{4}$.  

On the other hand,  the set of indices $F = \{1,4,5,7\}$ is facial for $\bar{\mathbf{A}}$, and $supp(\boldsymbol q_1) \subsetneq F$. In this case, the MLE exists only in the extended model $\,\xbar{RM}(\bar{\mathbf{A}})$, and is equal to $\boldsymbol q_1$ itself.  

Let $\boldsymbol q_2 = (1,0,0,0,0,0,0)'$. Because $supp(\boldsymbol q_2) = \{1\}$ is a subset of a facial set of $\mathbf{A}$ and of a facial set of $\bar{\mathbf{A}}$, the MLEs exist only in the corresponding extended models. 
 \qed

 Further properties of the adjustment factor, including its geometrical meaning, are described next, relying on the following result: 
\begin{theorem}\label{ThFiber}
Let $\mathbf{A}$ be a model matrix whose row space does not contain $\boldsymbol 1\tr$, and let $\bar{\mathbf{A}}$ be the matrix obtained by augmenting $\mathbf{A}$ with the row $\boldsymbol 1\tr$.  For any $\bd r_1,\:\bd r_2 \in \mathcal{P}$, $\bd r_1 \neq \bd r_2$, the following holds:
\begin{enumerate}
\item \begin{enumerate}[(i)]
\item  The MLEs under $RM(\mathbf A)$, given they exist, are equal if and only if the subset sums entailed by $\mathbf{A}$ are proportional:
$$
\hat{\bd p}_{r_1} = \hat{\bd p}_{r_2}  \quad \Leftrightarrow  \quad \mathbf{A} \bd r_1 = \kappa \mathbf{A} \bd r_2 \quad \mbox{for some } \kappa\in \cg K
$$
and the adjustment factors in the MLE satisfy: $\kappa\hat\gamma_{r_1}$ = $\hat\gamma_{r_2}$.
\end{enumerate}
\item  The MLEs under $RM(\bar{\mathbf A})$, given they exist, are equal if and only if the subset sums entailed by $\mathbf{A}$ coincide:
$$
\bar{\bd p}_{r_1} = \bar{\bd p}_{r_2}  \quad \Leftrightarrow  \quad \mathbf{A} \bd r_1 = \mathbf{A} \bd r_2.
$$
\end{enumerate}
\end{theorem}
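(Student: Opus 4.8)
The engine of the whole argument is the characterization of the MLE supplied by Theorem~\ref{MLEextendTHnew}: whenever the MLE under $RM(\mathbf A)$ exists, $\hat{\boldsymbol p}_q$ is the \emph{unique} point of $\mathcal F(\mathbf A,\boldsymbol q)\cap int(\mathcal X_{\mathbf A})$, equivalently the unique $\boldsymbol p\in\mathcal X_{\mathbf A}$ with $\mathbf A\boldsymbol p=\gamma_q\mathbf A\boldsymbol q$ and $\boldsymbol 1'\boldsymbol p=1$ for a suitable constant $\gamma_q>0$; and under $RM(\bar{\mathbf A})$ the same holds with $\gamma_q\equiv 1$. So the theorem reduces to bookkeeping with these identities, the only genuine content being the tracking of the scalars $\gamma$ and $\kappa$ and the verification of the relevant set memberships. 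Note also that $\mathbf A$ is $0$--$1$ with no zero column and full row rank, so $\mathbf A\boldsymbol r>\boldsymbol 0$ for every $\boldsymbol r\in\mathcal P$, which guarantees all the ratios that appear are well defined.

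For Part~1(i), for the ($\Leftarrow$) direction I would start from $\mathbf A\hat{\boldsymbol p}_{r_2}=\hat\gamma_{r_2}\mathbf A\boldsymbol r_2$ and rewrite it, using $\mathbf A\boldsymbol r_1=\kappa\mathbf A\boldsymbol r_2$, as $\mathbf A\hat{\boldsymbol p}_{r_2}=(\hat\gamma_{r_2}/\kappa)\,\mathbf A\boldsymbol r_1$; since $\hat{\boldsymbol p}_{r_2}\in int(\mathcal X_{\mathbf A})$ and $\boldsymbol 1'\hat{\boldsymbol p}_{r_2}=1$, this exhibits $\hat{\boldsymbol p}_{r_2}$ as a point of $\mathcal F(\mathbf A,\boldsymbol r_1)\cap int(\mathcal X_{\mathbf A})$, and the hypothesis that the MLE given $\boldsymbol r_1$ exists lets Theorem~\ref{MLEextendTHnew} identify it with $\hat{\boldsymbol p}_{r_1}$; uniqueness of the adjustment constant then forces $\hat\gamma_{r_1}=\hat\gamma_{r_2}/\kappa$, i.e.\ $\kappa\hat\gamma_{r_1}=\hat\gamma_{r_2}$. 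For ($\Rightarrow$), writing $\hat{\boldsymbol p}:=\hat{\boldsymbol p}_{r_1}=\hat{\boldsymbol p}_{r_2}$ and comparing $\mathbf A\hat{\boldsymbol p}=\hat\gamma_{r_1}\mathbf A\boldsymbol r_1=\hat\gamma_{r_2}\mathbf A\boldsymbol r_2$ gives $\mathbf A\boldsymbol r_1=\kappa\mathbf A\boldsymbol r_2$ with $\kappa=\hat\gamma_{r_2}/\hat\gamma_{r_1}>0$; since $\boldsymbol r_1\in\mathcal P$ realizes $\mathbf A\boldsymbol r_1=\kappa\mathbf A\boldsymbol r_2$, the fiber $\mathcal F(\mathbf A,\boldsymbol r_2,\kappa)$ is nonempty, so $\kappa\in\mathcal K$, and the factor identity is again immediate.

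For Part~2, I would use that under $RM(\bar{\mathbf A})$ the adjustment factor is identically $1$, so $\bar{\boldsymbol p}_q$ is the unique $\boldsymbol p\in\mathcal X_{\bar{\mathbf A}}$ with $\bar{\mathbf A}\boldsymbol p=\bar{\mathbf A}\boldsymbol q$ and $\boldsymbol 1'\boldsymbol p=1$. Because $\boldsymbol 1'$ is a row of $\bar{\mathbf A}$ and $\boldsymbol 1'\boldsymbol r=1$ for every $\boldsymbol r\in\mathcal P$, the condition $\bar{\mathbf A}\boldsymbol p=\bar{\mathbf A}\boldsymbol q$ is equivalent to $\mathbf A\boldsymbol p=\mathbf A\boldsymbol q$. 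Hence, if $\mathbf A\boldsymbol r_1=\mathbf A\boldsymbol r_2$, the point $\bar{\boldsymbol p}_{r_2}$ satisfies the defining conditions attached to $\boldsymbol r_1$ as well, so $\bar{\boldsymbol p}_{r_1}=\bar{\boldsymbol p}_{r_2}$ by uniqueness; conversely, if $\bar{\boldsymbol p}_{r_1}=\bar{\boldsymbol p}_{r_2}$ then $\mathbf A\boldsymbol r_1=\mathbf A\bar{\boldsymbol p}_{r_1}=\mathbf A\bar{\boldsymbol p}_{r_2}=\mathbf A\boldsymbol r_2$.

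The only step that calls for some care---and the closest thing to an obstacle---is making sure, in Part~1, that the ``interior of the variety'' half of the MLE characterization travels between the two fibers: one must invoke that $\mathcal F(\mathbf A,\boldsymbol q)\cap int(\mathcal X_{\mathbf A})$ is a \emph{single} point whenever nonempty (Theorem~\ref{MLEextendTHnew}), not merely that the moment equations together with normalization have a solution in $\mathcal X_{\mathbf A}$. Everything else is linear algebra and scalar bookkeeping, and the hypothesis ``given they exist'' is used precisely once in each direction, to license the appeal to uniqueness.
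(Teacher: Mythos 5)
Your proposal is correct, but it is worth noting that the paper itself does not actually prove this theorem: it states that the claims are a reformulation of Theorem 4.4 in \cite{KRD11} and that ``the relationship between the adjustment factors is obvious,'' and omits any argument. What you have written is therefore a genuine, self-contained derivation from Theorem \ref{MLEextendTHnew} rather than a rerun of the paper's (absent) proof. Your bookkeeping is sound in both directions: in 1(i) the ($\Leftarrow$) step correctly exhibits $\hat{\boldsymbol p}_{r_2}$ as a point of $\mathcal F(\mathbf A,\boldsymbol r_1)\cap int(\mathcal X_{\mathbf A})$ and then invokes the singleton property of that intersection (which is exactly what the identity $\hat{\boldsymbol p}_q=\mathcal F(\mathbf A,\boldsymbol q)\cap int(\mathcal X_{\mathbf A})$ in Theorem \ref{MLEextendTHnew} asserts), and the ($\Rightarrow$) step correctly certifies $\kappa=\hat\gamma_{r_2}/\hat\gamma_{r_1}\in\mathcal K$ by observing that $\boldsymbol r_1$ itself witnesses nonemptiness of $\mathcal F(\mathbf A,\boldsymbol r_2,\kappa)$. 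Part 2 is likewise clean once you note, as you do, that the row $\boldsymbol 1'$ of $\bar{\mathbf A}$ carries no information on $\mathcal P$, so $\bar{\mathbf A}\boldsymbol p=\bar{\mathbf A}\boldsymbol q$ collapses to $\mathbf A\boldsymbol p=\mathbf A\boldsymbol q$, and that $\gamma_q\equiv 1$ in the presence of the overall effect. You are also right to single out the uniqueness of the fiber--variety intersection as the one place where the argument could silently fail if only existence (rather than uniqueness) of a solution to the moment equations were available; that is precisely the content one must import from Theorem \ref{MLEextendTHnew} (ultimately from \cite{KRD11,KRextended}). The only cosmetic quibble is that positivity of $\mathbf A\boldsymbol r$ follows from $\mathbf A$ having no zero \emph{rows} (guaranteed by full row rank), not from it having no zero columns; this does not affect anything downstream.
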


\vspace{6mm}

The statements are a reformulation of  Theorem 4.4 in \cite{KRD11}, and no proofs are provided here. The relationship between the adjustment factors is obvious.

The theorem implies that $\cg{F}(\mathbf A, \bd q)$ is an equivalence class in $\mathcal{P}$, in the sense that,  for any $\bd r \in \cg{F}(\mathbf A, \bd q)$, the MLE under $RM(\mathbf{A})$ satisfies $\hat{\bd p}_{r}$ = $\hat{\bd p}_{q}$. Each sub-family $\cg F(\mathbf A, \bd q,\kappa)$ is characterized by its unique adjustment factor under $RM(\mathbf{A})$. That is, for every $ \boldsymbol r_1, \boldsymbol r_2 \in \mathcal{F}(\mathbf{A}, \boldsymbol q, \kappa)$, $\,\,\boldsymbol r_1 \neq \boldsymbol r_2$,
$$\hat{\boldsymbol p}_{r_1} = \hat{\boldsymbol p}_{r_2} = \hat{\boldsymbol p}_{q}, \qquad \hat\gamma_{r_1} = \hat\gamma_{r_2} =\hat\gamma_{q}/\kappa.$$ 
In addition, $\,\,\bar{\boldsymbol p}_{r_1} =  \bar{\boldsymbol p}_{r_2}$ for any $ \boldsymbol r_1, \boldsymbol r_2 \in \mathcal{F}(\mathbf{A}, \boldsymbol q, \kappa)$, and therefore, for a fixed $\kappa$, $\cg F(\mathbf A, \bd q,\kappa)$ is an equivalence class under $RM(\bar{\mathbf{A}})$.

From a geometrical point of view, $\cg{F}(\mathbf{A}, \bd q)$ is a polyhedron which decomposes into polyhedra $\cg F(\mathbf{A}, \bd q,\kappa)$, with $\kappa>0$; clearly, $\bd q\in \cg F(\mathbf{A},\bd q,1)$. The MLE under $RM(\bar{\mathbf A})$ given $\bd r\in \cg F(\mathbf{A}, \bd q,\kappa)$ is the unique point common to the polyhedron $\cg F(\mathbf{A}, \bd q,\kappa)$ and the variety $\mathcal{X}_{\bar{\mathbf{A}}}$. Among the feasible values of $\kappa$ there exists a unique one, say $\hat\kappa$, such that the MLE $\bar{\bd p}_{r}$, $\forall\bd r\in \cg F(\mathbf{A}, \bd q,\hat \kappa)$, coincides with the MLE of $\bd q$ under $RM(\mathbf{A})$, $\hat{\bd p}_{q}$. This happens when $\hat\gamma_{r}=1$ so that, from (ii) in Theorem \ref{ThFiber}, $\hat\kappa$ = $\hat\gamma_{q}$. This latter point, $\hat{\bd p}_{q}$, is the intersection between $\cg F(\mathbf{A}, \bd q)$ and the non-homogeneous variety $\mathcal{X}_{{\mathbf{A}}}$. This specific value of the adjustment factor $\gamma_q = \hat\kappa$, is the adjustment factor of the MLE under $RM(\mathbf{A})$ given $\bd q$. An illustration is given next.

\vspace{5mm}

Relational models for probabilities without the overall effect are curved  exponential families, and the computation of the MLE under such models is not straightforward. An extension of the iterative proportional fitting procedure, G-IPF, that can be used for both models with and models without the overall effect was proposed in \cite{KRipf1} and is implemented in \cite{gIPFpackage}. Alternatively, the MLEs can be computed, for instance, using  the Newton-Raphson algorithm or the algorithm of \cite{EvansForcina11}. One of the algorithms described in \cite{Forcina2017} gave an idea of a possible modification of G-IPF. A brief description of the original and modified versions of G-IPF is given below:

\vspace{8mm}
\begin{center}
\begin{tabular}{lcr}
G-IPF & & G-IPFm \\
\hline 
 & &\\
Fix $\gamma> 0$  & &  Fix $\gamma > 0$ \\
 & &\\
\multicolumn{3}{c}{Run IPF($\gamma$) to obtain $\boldsymbol p_{\gamma}$, where}  \\
 & &\\
$\mathbf{A} \boldsymbol p_{\gamma} =  \gamma \mathbf{A}\boldsymbol q$ & & $\bar{\mathbf{A}} \boldsymbol p_{\gamma} = \left(\begin{array}{c} 1 \\ \gamma \mathbf{A}\boldsymbol q\end{array}\right)$ \\
 & & \\
$\mathbf{D} \log \boldsymbol p_{\gamma} = \boldsymbol 0$ & & $\bar{\mathbf{D}} \log \boldsymbol p_{\gamma} = \boldsymbol 0$\\
& &\\
\multicolumn{3}{c}{Adjust $\gamma$,  to approach the solution of} \\
 & &\\
$\boldsymbol 1\tr\boldsymbol p_{\gamma}  = 1$ & & $ \boldsymbol d_1\tr \log \boldsymbol p_{\gamma} = 0$ \\
& &\\
\multicolumn{3}{c}{{Iterate with the new $\gamma$ }} \\
 & &\\
\end{tabular}
\end{center}

\begin{theorem}\label{newGIPFconvTh}
If  $\boldsymbol q > \boldsymbol 0$, the G-IPFm algorithm converges, and its limit is equal to $\hat{\boldsymbol{p}}_q$, the ML estimate of $\boldsymbol p$ under $RM(\mathbf{A})$.
\end{theorem}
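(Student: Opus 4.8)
The plan is to reduce the convergence statement to a one‑dimensional monotonicity fact. For each admissible $\gamma>0$ denote by $\boldsymbol p_\gamma$ the output of the inner loop IPF$(\gamma)$, i.e.\ the unique $\boldsymbol p$ with $\bar{\mathbf D}\log\boldsymbol p=\boldsymbol 0$ (equivalently $\boldsymbol p\in RM(\bar{\mathbf A})$) and $\bar{\mathbf A}\boldsymbol p=(1,\gamma(\mathbf A\boldsymbol q)\tr)\tr$. Such a $\boldsymbol p$ exists and is unique because $RM(\bar{\mathbf A})$ is a minimal regular exponential family ($\bar{\mathbf A}$ has full row rank, $\boldsymbol 1\tr$ not being in the row space of $\mathbf A$) and, for $\boldsymbol q>\boldsymbol 0$, the target moments $\gamma\mathbf A\boldsymbol q$ lie in the interior of the convex support for all $\gamma$ in an open interval $\mathcal K\ni 1$; convergence of IPF$(\gamma)$ to this point is the classical convergence of iterative scaling in the presence of an overall effect, cf.\ \cite{KRipf1}. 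First I would observe that, since $\boldsymbol d_1\tr\boldsymbol 1=-1$ and $\mathbf A\boldsymbol d_1=\boldsymbol 0$, writing $\log\boldsymbol p_\gamma=\theta_0(\boldsymbol p_\gamma)\boldsymbol 1+\mathbf A\tr\boldsymbol\theta$ gives
$$
h(\gamma):=\boldsymbol d_1\tr\log\boldsymbol p_\gamma=-\theta_0(\boldsymbol p_\gamma),
$$
so the quantity the $\gamma$‑adjustment drives to zero is exactly the overall effect of $\boldsymbol p_\gamma$, and by Theorem~\ref{newTheoremVarieties} the equation $h(\gamma)=0$ says precisely that $\boldsymbol p_\gamma\in RM(\mathbf A)$.

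Next I would pin down the target value of $\gamma$. Since $\boldsymbol q>\boldsymbol 0$, Theorem~\ref{MLEextendTHnew} provides the MLE $\hat{\boldsymbol p}_q\in RM(\mathbf A)$ with $\mathbf A\hat{\boldsymbol p}_q=\gamma_q\mathbf A\boldsymbol q$, $\boldsymbol 1\tr\hat{\boldsymbol p}_q=1$, for a unique adjustment factor $\gamma_q>0$. As $RM(\mathbf A)\subseteq RM(\bar{\mathbf A})$ (Theorem~\ref{newTheoremVarieties}) and $\boldsymbol p_{\gamma_q}$ is the only element of $RM(\bar{\mathbf A})$ with those subset sums, $\boldsymbol p_{\gamma_q}=\hat{\boldsymbol p}_q$, and $\theta_0(\hat{\boldsymbol p}_q)=0$ gives $h(\gamma_q)=0$. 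Hence the theorem follows once $h$ is shown to be continuous and strictly monotone on $\mathcal K$: then $\gamma_q$ is its only zero, the one‑dimensional bracketing/secant search carried out by the $\gamma$‑adjustment step converges to $\gamma_q$, and, because $\gamma\mapsto\boldsymbol p_\gamma$ is continuous on $\mathcal K$ (implicit function theorem applied to $\bar{\mathbf A}\boldsymbol p=(1,\gamma(\mathbf A\boldsymbol q)\tr)\tr$, $\bar{\mathbf D}\log\boldsymbol p=\boldsymbol 0$, whose Jacobian in $\boldsymbol p$ is nonsingular), the iterates $\boldsymbol p_{\gamma_k}$ converge to $\boldsymbol p_{\gamma_q}=\hat{\boldsymbol p}_q$.

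The monotonicity of $h$ is the step I expect to be the main obstacle. On $\mathcal K$ the parameter path $\gamma\mapsto(\theta_0(\gamma),\boldsymbol\theta(\gamma))$ is real‑analytic; differentiating $\log\boldsymbol p_\gamma=\theta_0(\gamma)\boldsymbol 1+\mathbf A\tr\boldsymbol\theta(\gamma)$ in $\gamma$ and using the two constraints $\boldsymbol 1\tr\boldsymbol p_\gamma=1$ and $\mathbf A\boldsymbol p_\gamma=\gamma\mathbf A\boldsymbol q$, with $P_\gamma=\diag(\boldsymbol p_\gamma)$ and $\Sigma_\gamma=\mathbf A P_\gamma\mathbf A\tr\succ\mathbf O$, I would derive the closed form
$$
\dot\theta_0(\gamma)=\frac{-\gamma\, c_\gamma}{1-\gamma^{2}c_\gamma},\qquad c_\gamma=(\mathbf A\boldsymbol q)\tr\Sigma_\gamma^{-1}(\mathbf A\boldsymbol q)>0 .
$$
It then remains to show $\gamma^{2}c_\gamma<1$. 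Writing $\gamma\mathbf A\boldsymbol q=\mathbf A\boldsymbol p_\gamma=\mathbf A P_\gamma\boldsymbol 1$ and $\boldsymbol 1\tr P_\gamma\boldsymbol 1=\boldsymbol 1\tr\boldsymbol p_\gamma=1$, a Cauchy--Schwarz estimate with $\boldsymbol w=\Sigma_\gamma^{-1}(\mathbf A\boldsymbol p_\gamma)\neq\boldsymbol 0$ gives $\gamma^{2}c_\gamma=(\mathbf A\boldsymbol p_\gamma)\tr\Sigma_\gamma^{-1}(\mathbf A\boldsymbol p_\gamma)\le 1$, with equality only if $\mathbf A\tr\boldsymbol w$ is proportional to $\boldsymbol 1$; since $\mathbf A$ has full row rank and $\boldsymbol 1\tr$ is not in its row space, equality is impossible, so $\gamma^{2}c_\gamma<1$ and hence $h'(\gamma)=-\dot\theta_0(\gamma)>0$ throughout $\mathcal K$. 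It is exactly the hypothesis ``no overall effect'' that makes this inequality strict, which is why the argument uses it. Everything else---continuity and the behaviour of the outer root search---is routine given the results quoted above.
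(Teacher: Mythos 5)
Your proposal is correct and follows the same overall route as the paper's proof: an inner IPF step that converges to the unique point of $RM(\bar{\mathbf A})$ with subset sums $(1,\gamma(\mathbf A\boldsymbol q)\tr)\tr$, an outer one-dimensional search driving $f(\gamma)=\boldsymbol d_1\tr\log\boldsymbol p_\gamma$ to zero using strict monotonicity of $f$, and the final identification of the root with $\hat{\boldsymbol p}_q$ via the uniqueness characterization $\mathbf A\boldsymbol p=\gamma_q\mathbf A\boldsymbol q$, $\mathbf D\log\boldsymbol p=\boldsymbol 0$, $\boldsymbol 1\tr\boldsymbol p=1$. The one substantive difference is where the monotonicity comes from: the paper simply invokes Lemma~1 of Forcina (2017) to assert that $f$ is strictly increasing and differentiable, whereas you derive it from scratch, computing $\dot\theta_0(\gamma)=-\gamma c_\gamma/(1-\gamma^2 c_\gamma)$ with $c_\gamma=(\mathbf A\boldsymbol q)\tr(\mathbf A P_\gamma\mathbf A\tr)^{-1}(\mathbf A\boldsymbol q)$ and proving $\gamma^2c_\gamma<1$ by Cauchy--Schwarz, with strictness coming exactly from $\boldsymbol 1\tr$ not lying in the row space of $\mathbf A$. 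I checked this computation and it is correct; it makes the argument self-contained and makes visible why the no-overall-effect hypothesis is what forces the strict inequality, which the paper's citation hides. Your observation that $f(\gamma)=-\theta_0(\boldsymbol p_\gamma)$, so that the root condition is literally ``the overall effect vanishes'' and Theorem~\ref{newTheoremVarieties} applies, is also a nice conceptual sharpening not made explicit in the paper. Both versions leave the convergence of the outer root-finding step at the same informal level (monotone, differentiable, a zero exists, hence a bracketing search converges), so nothing is lost there.
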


\begin{proof}
The convergence of one iteration of G-IPFm, when $\gamma$ is fixed, can be proved similarly to Theorem 3.2 in \cite{KRipf1}. The limit is positive, $\tilde{\boldsymbol p}_{\gamma} > \boldsymbol 0$,  and thus, by Lemma 1 in \cite{Forcina2017}$f(\gamma) = \boldsymbol d_1\tr \log{\tilde{\bd p}}_{\gamma}$ is a strictly increasing and differentiable function of $\gamma$. So, one can update $\gamma$, until for some ${\gamma}_q$  the G-IPFm limit satisfies: $f(\gamma_q) =\boldsymbol d_1 \log \tilde{\boldsymbol p}_{{\gamma}_q} = 0$. 
Because, in this case, 
$$ \bd A \tilde{\bd{p}}_{\gamma_q} =  {\gamma}_q \bd A\bd q, \quad \bd D \log \tilde{\bd{p}}_{\gamma_q} = \bd 0, \quad \bd 1\tr \tilde{\bd{p}}_{\gamma_q} =1,$$
the uniqueness of the MLE implies that $\tilde{\bd{p}}_{\gamma_q} = \hat{\bd p}_q$ and ${\gamma}_{q} = \hat{\gamma}_q$.
\end{proof}

The original G-IPF can be used whether or not $\boldsymbol q$ has some zeros, and it computes a sequence whose elements are the unique intersections of the variety $\mathcal{X}_{\mathbf{A}}$ and each of the polyhedra defined by $\mathbf{A}\tilde{\bd \tau}  = \gamma \mathbf{A} \bd q$ for different $\gamma$. This sequence converges, and its limit belongs to the hyperplane $\boldsymbol 1\tr \boldsymbol \tau = 1$ \citep{KRextended}.  G-IPFm produces a sequence whose elements are the unique intersections of  the  interior of the homogeneous variety $\mathcal{X}_{\bar{\mathbf{A}}}$ and each of the polyhedra $\cg F(\mathbf{A}, \bd q,\gamma)$. The limit of this sequence belongs to the interior of the non-homogeneous variety $\mathcal{X}_{\mathbf{A}}$. To ensure the existence, differentiability, and monotonicity of $f(\gamma)$, described above, the G-IPFm algorithm should   be applied only when $\boldsymbol q > \boldsymbol 0$. If $\boldsymbol q$ has some zero components, the positive MLE $\hat{\bd p}_q$ may  still exist, see Theorem \ref{MLEextendTHnew}(ii). However,  for some $\boldsymbol q$, because, in general, the matrices $\mathbf{A}$ and $\bar{\mathbf{A}}$ have different facial sets, no strictly positive $\boldsymbol p_{\gamma}$ would satisfy $\bar{\mathbf{A}} \boldsymbol p_{\gamma} = \left(\begin{array}{c} 1 \\ \gamma \mathbf{A}\boldsymbol q\end{array}\right)$.  %It was illustrated above using Example \ref{Example0}.

\vspace{5mm}

Some limitations and advantages of using the generalized IPF were addressed in \cite{KRipf1}, Section 2. In particular, while the assumption of the model matrix to be of full row rank can be relaxed for G-IPF, it is one of the major assumptions for the Newton-Raphson and the Fisher scoring algorithms. The algorithms proposed in \cite{Forcina2017} also require the model matrix to be of full row rank, and their convergence relies on the positivity of the observed distribution.

\section{Loss of potentials in hematopoiesis}\label{hemato}
Hematopoietic stem cells (HSC) are able to become progenitors that, in turn, may develop into mature blood cells. Understanding the process of forming mature blood cells, called hematopoiesis,  is one of the most important aims of cell biology, as it may help to develop new cancer treatments. The HSC progenitors can proliferate (produce cells of the same type) or differentiate (produce  cells of different types). Multiple experiments suggested that HSC progenitors are multipotent cells and differentiate by losing one of the potentials.  While the mature blood cells are unipotent, they do not proliferate or differentiate The differentiation is believed to be a hierarchical process, with HSC progenitors and mature blood cells at the highest and the lowest levels, respectively.

The models discussed below apply to the steady-state of hematopoiesis, under the assumption that cells neither proliferate nor die and can undergo only first phase of differentiation. Various hierarchical models for differentiation have been proposed \citep*[cf.][]{Kawamoto2010,YeCells}. The equal loss of potentials (ELP) model was introduced in \cite{Perie2014}, and is described next.  Denote by $MDB$ the three-potential HSC progenitor of the $M$, $D$, and $B$ mature blood cell types. During the first phase of differentiation, an $MDB$ progenitor can differentiate by losing either one or two potentials at the same time, and thus produce a cell of one of the six types: $M$, $D$, $B$, $MD$, $MB$, $DB$. 

Let $\bd p$ be the vector of probabilities of losing the corresponding potentials from $MDB$:
$$\bd p = (p_{*DB}, p_{M*B},p_{MD*},p_{**B},p_{*D*},p_{M**})\tr.$$
For example, $p_{*DB}$ is the probability of losing the  $M$ potential from $MDB$, $p_{M*B}$ is the probability of loosing the $D$ potential from $MDB$, and $p_{**B}$  is the probability of losing the $M$ and $D$ potentials from $MBD$ at the same time, and so on.
The ELP model assumes that ``the probability to lose two potentials at the same time is the product of
the probability of losing each of the potentials'' \citep[see Caption to Fig 3A,][]{Perie2014}:
\begin{equation}\label{Ploss}
p_{**B} = p_{*DB} \cdot  p_{M*B}, \qquad p_{M**} =  p_{MD*} \cdot   p_{M*B}, \qquad  p_{*D*} =  p_{*DB} \cdot   p_{MD*}.
\end{equation}
The model specified by (\ref{Ploss}) is the relational model generated by the matrix
\begin{equation}\label{mxELP}
\mathbf A = \begin{pmatrix}
   1& 0& 0& 1& 1& 0\\ %M
   0& 1& 0& 1& 0& 1\\ %B
   0& 0& 1& 0& 1& 1 %D
   \end{pmatrix},
   \end{equation}
   or, in a parametric form, 
\begin{align}\label{ELPour}
p_{*DB} &= {\alpha_M}, \quad p_{M*B} = {\alpha_D}, \quad p_{MD*} = {\alpha_B},\nonumber\\
p_{**B} &= {\alpha_M\alpha_D},\quad p_{*D*} = {\alpha_M\alpha_B},\quad p_{M**} = {\alpha_D\alpha_B},
\end{align}
where, using the notation in \cite{Perie2014}, $\alpha_M,\alpha_D, \alpha_B$ are the parameters associated with the loss of the corresponding potential from $MDB$. It can be easily verified that the relational model generated by (\ref{mxELP}) does not have the overall effect, so the normalization has to be added as a separate condition: 
$$Z = p_{*DB} + p_{M*B} + p_{MD*} + p_{**B} + p_{*D*} + p_{M**} = 1.$$
\cite{Perie2014} define the ELP model in the following parametric form:
\begin{align}\label{ELP}
p_{*DB} &= {\alpha_M}/{Z}, \quad p_{M*B} = {\alpha_D}/{Z}, \quad p_{MD*} = {\alpha_B}/{Z},\nonumber\\
p_{**B} &= {\alpha_M\alpha_D}/{Z},\quad p_{*D*} = {\alpha_M\alpha_B}/{Z},\quad p_{M**} = {\alpha_D\alpha_B}/{Z}.
\end{align}
%Here, $Z = {\alpha_M+\alpha_D + \alpha_B + \alpha_M\alpha_D + \alpha_M\alpha_B + \alpha_D\alpha_B}$ denotes the normalizing constant. 
That is, the authors rescaled the loss probabilities to force them sum to $1$.  In fact, (\ref{ELP}) is also a relational model; it is generated by 
\begin{equation}\label{ELPmatrix}
\bar{\mathbf{A}} = \left(\begin{array}{rrrrrr}
1&1&1&1&1&1\\
1& 0& 0& 1& 1& 0\\ %M
0& 1& 0& 1& 0& 1\\ %B
0& 0& 1& 0& 1& 1 %D
\end{array}\right),
\end{equation}
and can be obtained by adding the overall effect to the model defined by (\ref{mxELP}). Because, the original model does not have the overall effect, adding a row of $1$'s changed this model. One can check by substitution that the probabilities in (\ref{ELP}) do not satisfy the multiplicative constraints (\ref{Ploss}).  The estimates of the probabilities of loss of potentials from the $MDB$ cells shown in Figure 3B  of \cite{Perie2014}. In the notation used here,
\begin{eqnarray}\label{MLE1}
&& \hat{p}_{*DB} = 0.35, \quad \hat p_{M*B} = 0.08, \quad \hat{p}_{MD*} = 0.49, \nonumber \\
&&\hat p_{**B} =0.01, \quad \hat p_{*D*} = 0.06, \quad \hat p_{M**} = 0.01.
\end{eqnarray} 
These probabilities sum to $1$, but also do not satisfy  (\ref{Ploss}).

\section*{Acknowledgments}

The authors wish to thank Antonio Forcina for his thought-provoking discussions, Ingmar Glauche and Christoph Baldow for their help with understanding the main concepts of hematopoiesis, and  Wicher Bergsma. The second author is also a Recurrent Visiting Professor at the Central European University and the moral support received is acknowledged.

\bibliographystyle{apacite}

\bibliography{cells}

\end{document}